\newtheorem{definition}{Definition}
\newtheorem{lemma}{Lemma}
\newtheorem{theorem}{Theorem}
\begin{document}

\begin{frontmatter}

%% Title, authors and addresses

%% use the tnoteref command within \title for footnotes;
%% use the tnotetext command for theassociated footnote;
%% use the fnref command within \author or \address for footnotes;
%% use the fntext command for theassociated footnote;
%% use the corref command within \author for corresponding author footnotes;
%% use the cortext command for theassociated footnote;
%% use the ead command for the email address,
%% and the form \ead[url] for the home page:
%% \title{Title\tnoteref{label1}}
%% \tnotetext[label1]{}
%% \author{Name\corref{cor1}\fnref{label2}}
%% \ead{email address}
%% \ead[url]{home page}
%% \fntext[label2]{}
%% \cortext[cor1]{}
%% \affiliation{organization={},
%%             addressline={},
%%             city={},
%%             postcode={},
%%             state={},
%%             country={}}
%% \fntext[label3]{}

\title{Identifying Influential Users in Unknown Social Networks for Adaptive Incentive Allocation Under Budget Restriction}

%% use optional labels to link authors explicitly to addresses:
%% \author[label1,label2]{}
%% \affiliation[label1]{organization={},
%%             addressline={},
%%             city={},
%%             postcode={},
%%             state={},
%%             country={}}
%%
%% \affiliation[label2]{organization={},
%%             addressline={},
%%             city={},
%%             postcode={},
%%             state={},
%%             country={}}

\author[label1]{Shiqing Wu\corref{cor1}}
\author[label2]{Weihua Li}
%\author[label3]{Zike Zhang}
\author[label3]{Hao Shen}
\author[label1]{Quan Bai}

\affiliation[label1]{organization={University of Tasmania},%Department and Organization
            country={Australia}}
\affiliation[label2]{organization={Auckland University of Technology},%Department and Organization
            country={New Zealand}}
\affiliation[label3]{organization={fortiss GmbH, Forschungsinstitut des Freistaats Bayern},country={Germany}}
\cortext[cor1]{Corresponding Author. Email: shiqing.wu@utas.edu.au}

\begin{abstract}
In recent years, recommendation systems have been widely applied in many domains. These systems are impotent in affecting users to choose the behavior that the system expects. Meanwhile, providing incentives has been proven to be a more proactive way to affect users' behaviors. Due to the budget limitation, the number of users who can be incentivized is restricted. In this light, we intend to utilize social influence existing among users to enhance the effect of incentivization. Through incentivizing influential users directly, their followers or friends in the social network are possibly incentivized indirectly. However, in many real-world scenarios, the topological structure of the network is usually unknown, which makes identifying influential users difficult. To tackle the aforementioned challenges, in this paper, we propose a novel algorithm specifically for exploring influential users in unknown networks, which can estimate the influential relationships among users based on their historical behaviors and without knowing the topology of the network. Meanwhile, we design an adaptive incentive allocation approach that determines incentive values based on users' preferences and their influence ability. We evaluate the performance of the proposed approaches by conducting experiments on both synthetic and real-world datasets. The experimental results demonstrate the effectiveness of the proposed approaches.
\end{abstract}

%%Graphical abstract
%\begin{graphicalabstract}
%\includegraphics{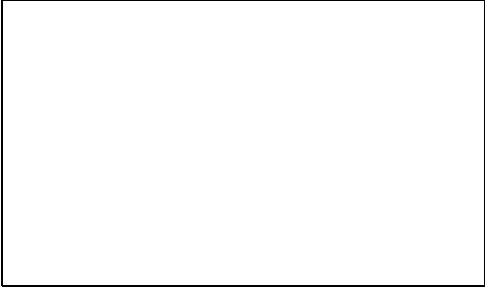}
%\end{graphicalabstract}

%%Research highlights
%\begin{highlights}
%\item Research highlight 1
%\item Research highlight 2
%\end{highlights}

\begin{keyword}
%% keywords here, in the form: keyword \sep keyword
Incentive allocation \sep Social influence \sep Unknown network \sep Agent-based modeling
\end{keyword}

\end{frontmatter}

%% \linenumbers

\section{Introduction}\label{sec:introduction}
In recent years, recommendation systems have become increasingly popular and widely applied in many domains, such as e-commerce and entertainment \cite{Shani2011Evaluating, Bobadilla2013Recommender}. The objective of such systems is to passively satisfy users' explicit (e.g., quires) and implicit (e.g., potential interest) requirements via learning users' behavior patterns from a great volume of data beforehand. However, these systems can be impotent if some ``system-level'' objectives need to be proactively realized, such as persuading users to take a bus rather than drive for the commute. One effective solution is to provide incentives to users, as people tend to take actions that can produce return \cite{Homans1974Social}. Motivated by this background, how to effectively provide users with incentives attracts a lot of attention \cite{Singla2015Incentivizing, Gan2017Incentivize, Qiu2019Incentivizing}, which is computationally modeled as \emph{incentive allocation} problem. The goal of the incentive allocation problem is to maximize the number of users who take the behavior that the incentive provider expects by providing these users with incentives under a budget restriction \cite{Wu2021Learning}.

In general, an incentive provider tends to provide an incentive for each user as little as possible, maximizing the utility derived from the provided incentive. In the meantime, users expect they would receive an incentive as much as possible to maximize their individual profit as well. The ideal consequence is that all users can receive a desirable incentive and take the behavior that the incentive provider expects, such that both utilities of the incentive provider and users can be maximized. However, the conflict existing between the incentive provider and users makes incentive allocation challenging, since overpricing would waste the limited budget, while underpricing may lead to failure of incentivizing users \cite{Singla2013Truthful}. Hence, it is necessary to design a reasonable incentive structure and policies that consider the utilities of both the incentive provider and users. 

Meanwhile, due to the budget restriction, not all users could receive sufficient incentives. It implies that the overall result of user incentivization might be affected. To tackle this problem, some studies consider utilizing social influence to expand the effect of user incentivization \cite{Li2020Incentive, Zhang2020Incentivize}, as information diffusion can play an important role in propagating persuasive information among friends in a social network \cite{Li2018Influence}. Through such a process, a user's behaviors or decision-making can be influenced by the neighbors or other influential users in the same social network \cite{Axsen2013Social}. Namely, if we can appropriately incentivize some influential users in a social network, it is possible to affect users' behaviors indirectly via social influence. However, in the real world, recognizing influential users in a network beforehand is difficult, as the topology of a social network is often unknown in many scenarios \cite{Gjoka2010Walking,Maiya2011Benefits,Zhuang2013Influence}. Even though such data is available, it may still contain false or weak edges which are ineffective at spreading influence \cite{Wilder2018Maximizing, Bond2012A}, e.g., a user who has lots of fake followers in social media. Furthermore, using data harvested from different applications to analyze influential relationships among users could be not reliable, because towards distinct items or topics, a user may cause diverse influence on his or her adjacent users \cite{Tang2009Social}. For instance, a famous movie blogger might hardly affect his followers' decision on restaurants. Although the network can be reconstructed via surveys, and then existing algorithms applied to find out influential users \cite{Wilder2018Maximizing}, requiring lots of users involving makes such approaches impractical \cite{Valente2007Identifying}. In this case, an intelligent approach that can analyze users' influential ability and then discover reliable influential users in unknown social networks is essential. 

To achieve effective incentive allocation in an unknown social network, the following challenging issues have to be tackled. First, to identify influential users in unknown social networks, it is necessary to rely on information excluding the topology of the network for the analysis of users' influence ability. One potential candidate is the record of users' historical behaviors, as an influential relationship may exist between two users if one user's behaviors are always constant with that of the other user \cite{Yu2014Collective}. Nevertheless, the analysis based on the behavior record could be inaccurate, since it is common in a social network that two users perform the same behaviors even there is no connection between them \cite{Wu2019Adaptive}. In this case, it is difficult to identify the true influence relationship between each pair of users, but estimating such relationships can be possible. In addition, a reasonable incentive allocation policy is required to determine the incentives distributed to each user. One key factor in the incentive allocation policy is users' influence ability. As aforementioned, engaging influential users can help affect more users' behaviors under a limited budget. Hence, strong influential users suppose to be provided with more incentives for keeping them involved, while weak influential users or non-influential users may receive less even no incentives \cite{Wu2018GreenCommute}. Meanwhile, the overall status of users in the network needs to be considered to adjust incentives as well, since the marginal effect of allocating incentives would decline with the increasing number of users who take the behavior that the incentive provider expects \cite{Leskovec2007Cost}. Therefore, an adaptive incentive allocation is required for different situations in the social networks, e.g., the provided incentives to all users need to be decreased if most users have taken the target behavior, or increased conversely.

The recent work \cite{Wu2019Adaptive} conducted preliminary research work on modeling users' decision-making process under the impact of incentives and social influence and proposed the Agent-based Decision-making Model (ADM). In the ADM, all users are modeled as autonomous agents, and each user would make decisions based on individual utility consisting of preferences, offered incentives, and the social influence exerted by adjacent users. It also proposed a method to identify potential influential users in unknown social networks, named IPE. However, the performance of IPE could be limited since it only considers direct influence among users when identifying influential users. In this paper, we systematically elaborate on the incentive allocation problem in unknown social networks. We propose a novel Influential Users Discovery (IUD) algorithm, which is used to estimate the influential relationship between each pair of users based on their historical behaviors. Meanwhile, we propose the Dynamic Game-based Incentive Allocation (DGIA) algorithm for determining the value of incentives distributed to users. To evaluate the performance of the proposed approaches, extensive experiments on both synthetic and real-world datasets under the ADM model are conducted. The experimental results demonstrate that: (1) the IUD algorithm effectively estimates influential relationships among users and discovers influential users in unknown networks; (2) given the same budget and time span, the combination of IUD and DGIA algorithms outperforms other incentive allocation approaches. To summarize, the contributions of this research work are as follows:
\begin{itemize}

\item We propose an algorithm to estimate the influential relationship between each pair of users based on their historical behaviors, for identifying influential users in an unknown social network. Different from most existing approaches, the proposed algorithm can estimate influential relationships among users without knowing the topology of the network, such as edges among users, neighbors of a user, etc.
\item We propose an adaptive incentive allocation algorithm for determining the value of incentives, which considers users' influence ability and sensitivity to incentives, and the overall status of users in the network.
\item We use four real-world datasets as static social networks and three synthetic datasets as dynamic social networks to evaluate the performance of the proposed approaches. The experimental results demonstrate the effectiveness of the proposed approaches.  
\end{itemize}

The remainder of this paper is organized as follows. Section \ref{sec:related work} reviews the literature related to this study. Section \ref{sec:preliminaries} introduces the problem description and formal definitions. Subsequently, we first give details about the proposed approach for discovering potential influential users, then we explain the proposed approach for allocating incentives in Section \ref{sec:approach}. We demonstrate the experimental results for evaluating the performance of the proposed approach in Section \ref{sec:exp} and conclude this paper in Section \ref{sec:conclusions}.

\section{Related Work}\label{sec:related work}
\subsection{Incentive Allocation}
In recent years, many studies have been devoted to implementing effective incentive allocation for promoting users to take specific items or behaviors \cite{Qiu2019Incentivizing, Zhan2020Incentive, Wang2020Incentive, Lopez2020Cost}. Given a budget restriction, the goal of incentive allocation is to maximize the effect of users incentivization \cite{Wu2021Learning}. To better realize such a goal, some incentive allocation approaches are designed and tuned according to the features of scenarios, where the allocation policy determines incentive values based on specific attributes, such as users' preferences, location, and skill abilities \cite{Gan2017Incentivize, Wu2019Incentivizing, Li2019Redundancy, Qiu2019Incentivizing}. These approaches assume such necessary attributes are acquirable and utilize these attributes to model users' demands to incentives, such that effective incentive allocation can be realized. However, the performance of these attributes-based approaches could not be maintained when the data of necessary attributes is unavailable.

Meanwhile, inference-based approaches are also widely studied for incentive allocation. A typical example is the Budgeted Multi-Armed Bandit (BMAB) approaches, which model all incentive options as a set of arms and aim to find out the optimal arm to generate incentives \cite{Tran2010Epsilon, Tran2012Knapsack, Xia2015Thompson}. Singla et al. have proved the effectiveness of deploying BMAB in the real world \cite{Singla2015Incentivizing}. They proposed a UCB-based algorithm, named DBP-UCB, to incentivize users' engagement in helping to re-position sharing bikes among stations, avoiding the number of bikes in a station is excessive or lacking. However, most approaches assume that users' behaviors are independent, and they ignore that users' behaviors could be affected by external factors, such as social influence. This shortcoming makes these approaches fail to realize adaptive incentive allocation in social networks, where social influence changes frequently.

\subsection{Identifying Influential Users}
To better utilize the effect of social influence to affect users' behaviors, it is necessary to identify influential users in the social network and promote them first. The influence maximization problem, a similar problem aiming to select a number of users from a social network with the maximum influence spread, has been investigated over the past ten years \cite{Li2018Influence}. Although many studies have been conducted for solving the influence maximization problem from an algorithmic perspective \cite{Kempe2003Maximizing, Chen2009Efficient, Tang2014Influence, Nguyen2016Stop, Wang2021Maximizing, Calio2021Attribute, Li2021Social}, these works assume that the topology of a social network is explicitly given as input. However, knowledge about the network is usually unknown beforehand in many real-world applications and must be gathered via learning, observation, and survey \cite{Wu2021Learning, Wilder2018Maximizing}.

In this case, a few research works have been dedicated to identifying influential users in unknown social networks, where the topology of the network is not initially provided. For example, Mihara et al. proposed the IMUG algorithm that obtains true edges between users through limited probing and then determines the influential users \cite{Mihara2015Influence}. Wilder et al. proposed the ARISEN algorithm that queries random users to obtain true edges among users first and then applies the Influence Maximization algorithm to obtain the influential users \cite{Wilder2018Maximizing}. Eshghi et al. presented a novel approach inspired by the IMM algorithm \cite{Tang2014Influence} to tackle the influence maximization under the scenario where the network is partially visible \cite{Eshghi2019Efficient}. Kamarthi et al. designed a deep reinforcement learning-based method for learning to discover the unknown network structure for influence maximization \cite{Kamarthi2020Influence}. Nevertheless, the assumption applied to these aforementioned research works that users would return true information of the edges and topology of the network makes these approaches impractical. Meanwhile, these approaches only consider how to discover and reconstruct the network for activation of potential influential users to maximize influence diffusion, while they ignore the cost of incentivizing these users. This shortcoming makes these approaches unfeasible to be deployed for the incentive allocation problem constrained by the limited budget.

\subsection{Agent-based Modeling for Users' Behaviors}
Agent-Based Modeling (ABM) has been widely adopted in many areas for modeling complex systems, simulating continuous variations, and analyzing the trend of a particular phenomenon \cite{Bonabeau2002Agent, Macal2009Agent}. In the ABM, each entity is modeled as an autonomous agent that has the individual pattern for determining actions. This feature allows us to investigate the macro world from a micro level. For example, when modeling influence diffusion, some studies have abandoned popularly adopted influence diffusion models, e.g., Independent Cascade (IC) model and Linear Threshold (LT) model \cite{Kempe2003Maximizing}, since such models oversimplify the influence diffusion process and ignore that users' attributes and behaviors may affect the influence acceptance \cite{Li2019Automated}. In this case, they considered leveraging the advantages of the ABM to model the interactions among users from the user-centered perspective \cite{Li2019Automated, Jiang2015Diffusion, Li2018Modelling}. However, to the best of our knowledge, only a few studies adopted ABM to model the process of incentive allocation. Wu et al. proposed the Agent-based Decision-making (ADM) model that leverages ABM to model users' behaviors in the scenario where social influence and incentive co-exist simultaneously \cite{Wu2019Adaptive}. They also proposed the IPE method to identify potential influential users in unknown social networks. However, the performance of IPE could be limited since it only considers direct influence among users when identifying influential users. Different from \cite{Wu2019Adaptive}, in this paper, we consider both direct and indirect influence among users to estimate users' influential ability. Meanwhile, we will investigate how to tackle the incentive allocation problem in unknown networks based on the ADM model.

\section{Preliminaries}\label{sec:preliminaries}
\subsection{Problem Description}
\begin{figure}[t]
\centering
\includegraphics[width=\textwidth]{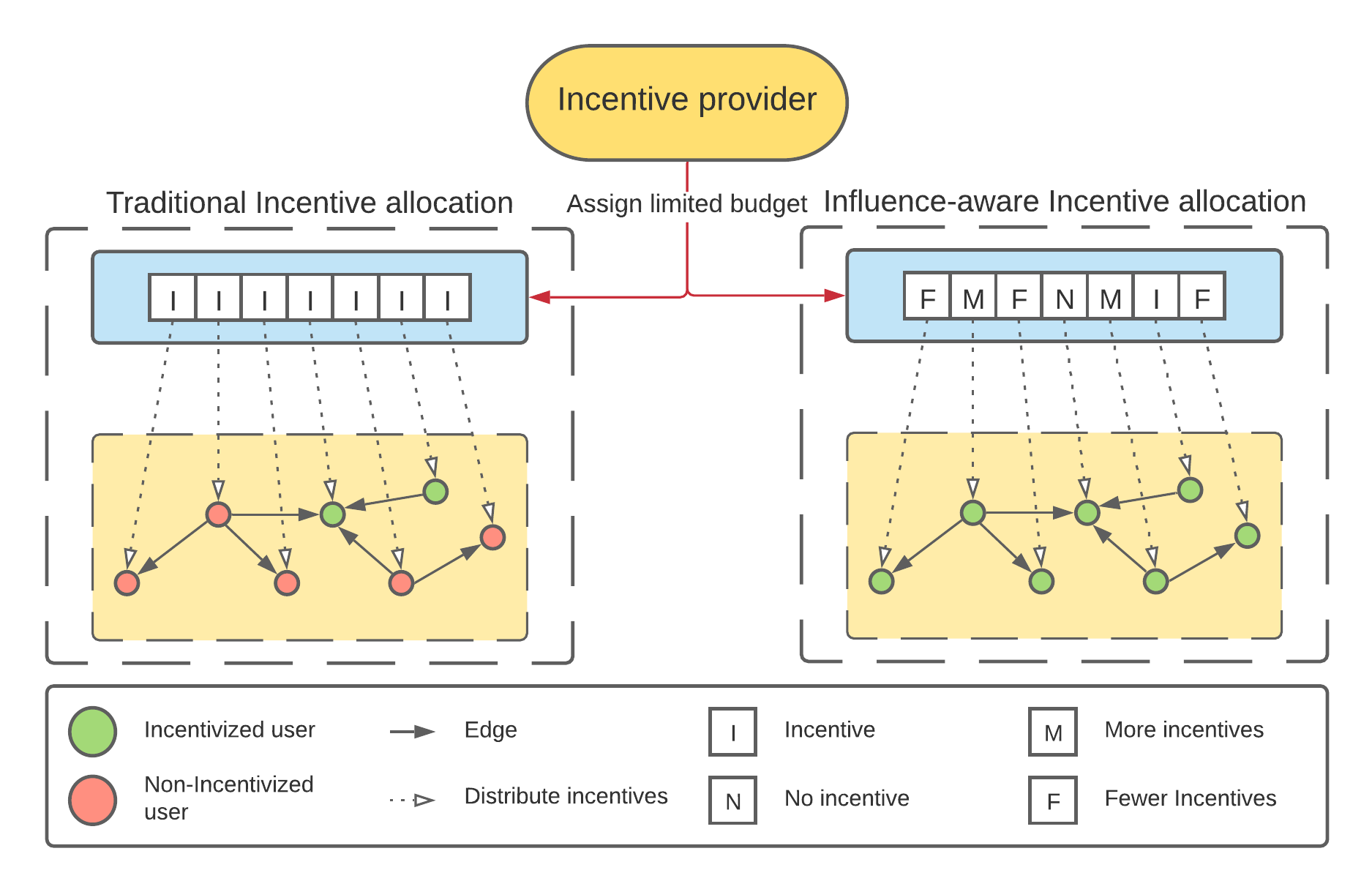}
\caption{Comparison of two different incentive allocation}
\label{fig:two_models}
\end{figure}

Preference is a key factor in people's decision-making process \cite{Allingham2002Choice}. However, people may start considering the overall utility of each available option before making decisions when additional factors they concentrate on occur \cite{Coleman1992Rational}. For example, many people may select soft drinks for purchase based on their flavor, while they may change their mind if there is an attractive promotion, e.g., discount. Here, the promotion can be regarded as an incentive, which increases the utility of the promoted items. Therefore, providing incentives is regarded as an effective method to affect users' decision-making process.

Many incentive allocation approaches have been developed \cite{Singla2015Incentivizing, Zhang2020Incentivize}. However, most existing approaches determine incentive values based on attributes of users only while ignoring the influence existing among them \cite{Gan2017Incentivize, Qiu2019Incentivizing, Wu2019Incentivizing}. As shown in Figure \ref{fig:two_models}, existing traditional incentive allocation may distribute incentives to all users based on its pricing policy, which is either customized or uniform. These approaches can effectively incentivize most users when the budget is sufficient. However, the performance would decrease rapidly once the budget is limited or the number of users becomes too large, since most users are failed to be incentivized due to insufficient incentives. Furthermore, when a social network exists among users, some influential users who are not incentivized may propagate natural or negative influence to their followers, reducing the effect of incentives simultaneously. Eventually, a possible result is that the incentive provider fails to incentivize most users. By contrast, influence-aware incentive allocation, which considers social influence among users when determining incentives, may perform much better in such scenarios. The reason is that influence-aware incentive allocation tends to allocate more incentive to influential users, expecting they can be successfully incentivized. These influential users can positively influence their followers such that the remaining users can be incentivized as well, even those non-influential users only receive fewer incentives or no incentive. Meanwhile, influence-aware incentive allocation can help save the limited budget for further allocation. Therefore, it is necessary to incentivize influential users in priority in social networks when the budget is insufficient.

As aforementioned, the information about social networks is usually unknown beforehand. Namely, the topology of the network is not given as input, such that every user's influential ability is difficult to measure. To realize effective influence-aware incentive allocation, in this paper, we first propose a method to estimate each user's influential ability for identifying influential users. Subsequently, we propose a novel approach that is able to allocate effective incentives to users.

\subsection{Formal Definitions}\label{subsec:definition}
Before introducing the proposed approaches, we first give formal definitions and introduce notations used in this paper. 

\begin{definition}\label{def:agent}
\textbf{A user agent}, $v_i$, denotes an autonomous agent representing a particular user in a social network. At at time step $t$, each $v_i$' s behavior can be represented by using $s_{v_i,t}$, where $s_{v_i,t} \in S_{v_i}$ and $S_{v_i}$ represents the set of $v_i$'s historical behaviors. The definition of possible actions that $v_i$ can take is introduced in Definition \ref{def:action}.
\end{definition}

\begin{definition}\label{def:network}
\textbf{A social network}, $G=(V,E)$, is defined as a directed graph, where $V=\{v_1,...,v_i\}$ denotes a set of user agents, and $E=\{e_{ij}|\{v_i,v_j\} \subseteq V\}$ denotes a set of edges in the network. Each directed edge $e_{ij}$ represents the influence from $v_i$ to $v_j$, and the weight $w_{ij}\in(0,1]$ associated with $e_{ij}$ represents the strength of the influence. In $G$, let $N_{v_i}^{out}$ be the set of agents who are directly influenced by $v_i$, i.e., $v_i$'s outgoing neighbors. Similarly, let $N_{v_i}^{in}$ be the set of agents who can influence $v_i$ directly, i.e., $v_i$'s incoming neighbors. Note that the affiliation information is maintained by each agent locally, and the system has no knowledge about the topology of $G$.
\end{definition}

\begin{definition}\label{def:action}
\textbf{An action option}, $a_m$, denotes a specific action that $v_i$ can choose, where $a_m\in A$ and $A=\{a_1,...,a_m\}$ represents a set of finite action options. For each $a_m$, $v_i$'s personal preference is defined as a fixed value $p_{v_i,a_m}\in[0,1]$, and the user utility at time step $t$ is represented as $u_{v_i,a_m,t}$. The action that the incentive provider expects users to choose is defined as $a^* \in A$. We regard a user is incentivized at time step $t$ if $s_{v_i,t}=a^*$.
\end{definition}

\begin{definition}\label{def:incentive}
\textbf{An incentive}, $r_{v_i,t}$, denotes the incentive that the system provider allocates to user $v_i$ if $v_i$ selects $a^*$ at time step $t$. The value of $r_{v_i,t}$ is constrained by the remaining budget $B_t$, i.e., $r_{v_i,t} \leq B_t$. Note that $v_i$ can only obtain the incentive when taking $a^*$, and after that, the correspoding value would be deducted from $B_t$. 
\end{definition}

\begin{definition}\label{def:influential_degree}
\textbf{Influential degree}, $\theta_{v_i}$, denotes $v_i$'s influential ability on affecting the rest of users over the whole network. The range of $\theta_{v_i}$ belongs to $[0,1]$, where higher $\theta_{v_i}$ implies that $v_i$ is potential to influence more users in the network, whereas smaller $\theta_{v_i}$ implies conversely. Note that $\theta_{v_i}$ is an estimated value, and the detail about calculation of $\theta_{v_i}$ would be introduced in Subsection \ref{subsec:iud}.
\end{definition}

\begin{definition}\label{def:incentive_sensitivity}
\textbf{Incentive sensitivity}, $\rho_{v_i,t}$, represents the sensitivity that $v_i$'s decision-making towards incentive. The range of $\rho_{v_i,t}$ is [0,1], where higher $\rho_{v_i,t}$ implies that $v_i$ requires fewer incentives to adopt $a^*$, and lower $\rho_{v_i,t}$ implies that incentivizing $v_i$ may need to spend more budget. Note that $\rho_{v_i,t}$ is an estimated value, and the detail about calculation of $\rho_{v_i,t}$ would be introduced in Subsection \ref{subsec:dgia}.
\end{definition}

\begin{definition}\label{def:GAUP}
\textbf{Global Activated Users Percentage (GAUP)} $\mu_t$, reflects the percentage of users who take $a^*$ at time step $t$ in the network. Higher $\mu_t$ implies more users select $a^*$ at time step  $t$, and lower $\mu_t$ implies conversely. $\mu_t$ can be formulated by using Equation \ref{equ:mu}, where the numerator denotes the number of users who select $a^*$ at time step $t$, and the denominator denotes the number of users registered in the social network.
\end{definition}

\begin{equation}\label{equ:mu}
\mu_t = \frac{{\left|\{v_i|s_{v_i,t}=a^*, v_i \in V\}\right|}}{|V|}
\end{equation}

\iffalse
\begin{table}[!t]
	\centering
	\scalebox{0.95}{
	\resizebox{\textwidth}{!}{
	\begin{tabular}{ll|ll}
		\hline
		Notation & Description & Notation & Description \\
		\hline
		$a_m$ & An action option&
		$a^*$ & The system-expected action\\
		$A$ & Set of available actions&
		
		$t$ & Time step\\
		
		$v_i$ & A user agent&
		$p_{v_i,a_m}$ & $v_i$'s preference towards $a_m$\\
		$u_{v_i,a_m,t}$ & $v_i$' utility towards $a_m$ at $t$&
		$s_{v_i,t}$ & $v_i$'s behavior at $t$\\
		$S_{v_i}$ & Set of $v_i$'s historical behaviors&
		
		$G$ & A social network\\
		$V$ & Set of all user agents&
		$e_{ij}$ & Directed edge from $v_i$ to $v_j$\\
		$E$ & Set of all edges connecting users&
		$w_{ij}$ & Weight of $e_{ij}$\\
		$N_{v_i}^{in}$ & Set of agents who influence $v_i$&
		$N_{v_i}^{out}$ & Set of agents who are influenced by $v_i$\\
		$k_{v_i,a_m,t}$ & Overall influence from $N_{v_i}^{in}$ at $t$&
		
		$r_{v_i,t}$ & Incentive allocated to $v_i$ at $t$\\
		$B_t$ & Remaining budget at $t$&
		
		$\theta_{v_i}$ & $v_i$'s influential degree\\
		
		$\rho_{v_i,t}$ & incentive sensitivity of $v_i$&
		$\omega_{v_i}$ & A ratio of $p_{v_i,a^*}$ to all preferences\\
		
		$\mu_t$ & Global Activated Users Percentage&
		
		$|\cdot|$ & The number of elements in a set\\
		$P(\cdot)$ & Probability\\
		\hline
	\end{tabular}
	}
	}
	\caption{Frequently Used Notations}
	\label{tab:1}
\end{table}
\fi
%The notations used in this paper are summarized in Table \ref{tab:1}. 
Given a limited time span $t \in [0,T]$ and a finite budget $B_t$ at each time step, the major objective is to incentivize users in a social network as many as possible, i.e., maximize $u_t$. To realize this goal, the proposed approach needs to discover influential users in an unknown network, and allocate reasonable incentives to users under the restriction of the limited budget.

\subsection{Agent-based Decision-Making Model}\label{subsec:adm}
In this study, we deploy the Agent-based Decision-Making (ADM) model \cite{Wu2019Adaptive} to simulate users' behaviors under the simultaneous impact of incentives and social influence. The ADM model is a decentralized model that inherits the advantages of Agent-based Modeling, i.e., each user is modeled as an autonomous agent, which maintains its personal attributes and chooses the action with the highest user utility at every time step. At time step $t$, $v_i$'s behavior $s_{v_i,t}$ can be determined by using Equation \ref{equ:beh}, where $u_{v_i,a_m,t}$ denotes the $v_i$'s user utility towards $a_m$. 

\begin{equation}\label{equ:beh}
s_{v_i,t}=\mathop{\arg\max}_{a_m \in A} u_{v_i,a_m,t}
\end{equation}

Towards a specific action $a_m$, $v_i$'s user utility is determined by three factors, i.e., $v_i$'s personal preference, social influence from $v_i$'s neighbors, and the incentive allocated to $v_i$. Equation \ref{equ:uti} calculates the user utility of $a_m$ at time step $t$, where $p_{v_i,a_m}$ denotes $v_i$'s preference, $k_{v_i,a_m,t}$ denotes the social influence exerted from $N_{v_i}^{in}$, and $r_{v_i,t}$ denotes the incentive. Note that, if $v_i$ receives no incentive, then the calculation of $u_{v_i, a^*, t}$ is exactly the same as $u_{v_i,a_m,t}$, where $a_m\in A\backslash\{a^*\}$. 

\begin{equation}\label{equ:uti}
u_{v_i,a_m,t} = \left\{ 
\begin{array}{ll}
p_{v_i,a_m}+k_{v_i,a_m,t}+r_{v_i,t}&,a_m = a^*\\
p_{v_i,a_m}+k_{v_i,a_m,t}&, a_m \neq a^*\\
\end{array}
\right.
\end{equation}

To model the value of $k_{v_i,a_m,t}$, inspired by the Linear Threshold (LT) Model \cite{Granovetter1978Threshold}, we assume that $k_{v_i,a_m,t}$ is aggregated by influence from $v_i$'s incoming neighbors who select $a_m$ at time step $t-1$. $k_{v_i,a_m,t}$ can be formulated by using Equation \ref{equ:inf}, where $N_{v_i}^{in}$ denotes the set of $v_i$'s incoming neighbors, and $w_{ji}$ represents the influence strength that $v_j$ influences $v_i$. To control the range of $k_{v_i,a_m,t}$, it satisfies that $\sum_{v_j \in N_{v_i}^{in}}w_{ji}\leq 1, \forall v_i \in V$. The benefit of inheriting the features of LT model is that aggregated $k_{v_i,a_m,t}$ is compatible in aggregating with $p_{v_i,a_m}$ and $r_{v_i,t}$ to form $u_{v_i,a_m,t}$.

\begin{equation}\label{equ:inf}
k_{v_i,a_m,t}=\sum_{\substack{
				v_j\in N_{v_i}^{in}\\
				s_{v_j, t-1}=a_m}
				}{w_{ji}}
\end{equation}

On the other hand, an assumption is underlying in the LT model that $v_i$ would be influenced if the total influence strength from its neighbors exceeds its associated threshold, which is a fixed value ranging from 0 to 1. However, this assumption is not compatible with the ADM since users' decision-making process is affected by two more factors, i.e., personal preference and incentives. Namely, the condition that $v_i$ is successfully incentivized in the ADM might be more complicated than that in the LT model. 

To better understand how to effectively incentivize users in the ADM under the budget restriction, we first analyze the condition that an incentive needs to satisfy. Given current time step $t$ and a user $v_i$, let $a'=\mathop{\arg\max}_{a_m \in A}{u_{v_i,a_m, t}}$ be the action with the $v_i$'s highest preference, and $a''=\mathop{\arg\max}_{a_m \in A}{u_{v_i,a_m, t}}$ be the action with the $v_i$'s highest utility before allocating incentives at time step $t$. Then, we have:

\begin{lemma}\label{lemma:1}
Suppose $N_{v_i}^{in}=\emptyset$, $v_i$ can be incentivized only if $r_{v_i,t}$ is at least $p_{v_i,a'} - p_{v_i,a^*}$.
\end{lemma}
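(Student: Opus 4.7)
The plan is a direct unpacking of the definitions. Since $N_{v_i}^{in}=\emptyset$, Equation \ref{equ:inf} immediately gives $k_{v_i,a_m,t}=0$ for every $a_m \in A$, so the social-influence term drops out of Equation \ref{equ:uti}. This reduces the user utility at time step $t$ to $u_{v_i,a_m,t}=p_{v_i,a_m}$ for $a_m \neq a^*$ and $u_{v_i,a^*,t}=p_{v_i,a^*}+r_{v_i,t}$.

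Next, I would use the definition of being incentivized together with Equation \ref{equ:beh}. Saying that $v_i$ is incentivized at time step $t$ means $s_{v_i,t}=a^*$, which by the arg-max rule is equivalent to
\begin{equation*}
u_{v_i,a^*,t} \;\geq\; u_{v_i,a_m,t} \quad \text{for every } a_m \in A.
\end{equation*}
Specializing this to $a_m = a'$, where $a'=\arg\max_{a_m \in A} p_{v_i,a_m}$ (which under $N_{v_i}^{in}=\emptyset$ coincides with the pre-incentive utility maximizer $a''$), and substituting the reduced utility expressions yields
\begin{equation*}
p_{v_i,a^*}+r_{v_i,t} \;\geq\; p_{v_i,a'}.
\end{equation*}
Rearranging gives the necessary bound $r_{v_i,t} \geq p_{v_i,a'} - p_{v_i,a^*}$, which is exactly the claim.

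I do not expect a real obstacle in this proof; the statement is an immediate consequence of the utility decomposition in Equation \ref{equ:uti} and the arg-max behavior rule in Equation \ref{equ:beh}. The only subtlety worth flagging is the edge case $a^* = a'$, which makes the bound reduce to $r_{v_i,t}\geq 0$ and is therefore vacuous; and the handling of ties in the arg-max, which I would address by adopting the convention (implicit in the model) that $s_{v_i,t}=a^*$ requires $u_{v_i,a^*,t}$ to be at least as large as every competing utility, so the weak inequality above is the correct necessary condition.
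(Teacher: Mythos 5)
Your proposal is correct and follows essentially the same route as the paper's proof: set the influence term to zero, reduce the utilities to $p_{v_i,a'}$ and $p_{v_i,a^*}+r_{v_i,t}$, and compare them via the arg-max rule to obtain $r_{v_i,t}\geq p_{v_i,a'}-p_{v_i,a^*}$. Your added remarks on ties and the vacuous case $a^*=a'$ are reasonable but not needed beyond what the paper does.
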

\begin{proof}
Since $N_{v_i}^{in}=\emptyset$, then we can regard $k_{v_i,a_m,t}=0, \forall a_m \in A$. Hence, $u_{v_i,a',t}$ becomes $p_{v_i,a'}$, and $u_{v_i,a_m,t}$ turns to $p_{v_i,a*} + r_{v_i,t}$. In this case, $v_i$ can be incentivized only if $r_{v_i,t}$ is at least $p_{v_i,a'} - p_{v_i,a^*}$, and the lemma is proofed.
\begin{equation*}\label{equ:lemma1}
\begin{aligned}
&u_{v_i,a',t} \geq u_{v_i,a^*,t}
\Rightarrow r_{v_i,t} + p_{v_i,a^*} \geq p_{v_i,a'}
\Rightarrow r_{v_i,t} \geq p_{v_i,a'} - p_{v_i,a^*}
\end{aligned}
\end{equation*}
\end{proof}

\begin{lemma}\label{lemma:2}
Suppose $N_{v_i}^{in} \neq \emptyset$. If $v_i$ is expected to be incentivized when $r_{v_i,t}<p_{v_i,a'} - p_{v_i,a^*}$, then it must satisfy that $k_{v_i,a^*,t}>k_{v_i,a',t}$.
\end{lemma}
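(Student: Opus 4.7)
The plan is to argue directly from the incentivization condition and Equation \ref{equ:uti}. Since $v_i$ is incentivized at time $t$, Equation \ref{equ:beh} forces $a^*$ to be the utility-maximizing action, so in particular $u_{v_i,a^*,t} \geq u_{v_i,a',t}$. I would expand both sides using Equation \ref{equ:uti}, remembering that only the $a^*$ branch carries the incentive term $r_{v_i,t}$ while the $a'$ branch consists only of $p_{v_i,a'} + k_{v_i,a',t}$. Rearranging this single inequality to isolate $r_{v_i,t}$ yields a lower bound of the form $r_{v_i,t} \geq \bigl(p_{v_i,a'} - p_{v_i,a^*}\bigr) + \bigl(k_{v_i,a',t} - k_{v_i,a^*,t}\bigr)$.

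The second step is to chain this lower bound with the hypothesized strict upper bound $r_{v_i,t} < p_{v_i,a'} - p_{v_i,a^*}$. The preference gap cancels on both sides, leaving $k_{v_i,a',t} - k_{v_i,a^*,t} < 0$, which is exactly the desired conclusion $k_{v_i,a^*,t} > k_{v_i,a',t}$. Conceptually, this is the natural complement to Lemma \ref{lemma:1}: once the incentive is too small to cover the preference deficit between $a^*$ and the user's favorite action $a'$, the only remaining mechanism in the ADM utility that can tip the balance toward $a^*$ is the aggregated social influence term $k$, so the influence on $a^*$ must strictly dominate that on $a'$.

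I do not expect a real obstacle here, since the argument is essentially one line of algebra built on Equations \ref{equ:beh} and \ref{equ:uti}. The only care needed is bookkeeping: the comparison is between $a^*$ and $a'$ (the highest-preference action), not between $a^*$ and the pre-incentive utility-maximizer $a''$, so the preference gap $p_{v_i,a'} - p_{v_i,a^*}$ must appear on both sides of the chain in exactly the same form for cancellation. Using the hypothesis with a strict inequality also guarantees that the resulting influence inequality is strict, matching the statement of the lemma.
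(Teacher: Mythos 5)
Your proof is correct, and it takes a genuinely shorter route than the paper's. The paper states the incentivization condition relative to $a''$, the pre-incentive utility maximizer ($u_{v_i,a^*,t}\geq u_{v_i,a'',t}$), and then argues by cases: if $a'=a''$ it rearranges directly, and if $a'\neq a''$ it passes through the extra link $u_{v_i,a'',t}>p_{v_i,a'}+k_{v_i,a',t}$ to transfer the bound from $a''$ back to $a'$. You instead observe that being incentivized means $a^*$ maximizes the post-incentive utility in Equation \ref{equ:beh}, so the single comparison $u_{v_i,a^*,t}\geq u_{v_i,a',t}$ is already available; expanding both sides with Equation \ref{equ:uti} and chaining with the strict hypothesis $r_{v_i,t}<p_{v_i,a'}-p_{v_i,a^*}$ cancels the preference gap and yields $k_{v_i,a^*,t}>k_{v_i,a',t}$ with no case split. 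Your route buys brevity and also sidesteps the paper's slight over-claim that $u_{v_i,a'',t}$ strictly exceeds $u_{v_i,a',t}$ when $a'\neq a''$ (only $\geq$ is guaranteed if utilities tie, though the paper's chain still survives because its first inequality is strict); what the paper's longer scaffolding buys is that the same $a''$-based case structure is reused verbatim in Lemma \ref{lemma:3}. One shared, harmless edge case worth noting: if $a'=a^*$ the conclusion could not hold, but then $p_{v_i,a'}-p_{v_i,a^*}=0$ and the hypothesis is unsatisfiable for a nonnegative incentive, so the lemma is vacuous there.
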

\begin{proof}
When $v_i$ is incentivized, $r_{v_i,t}$ needs to make $u_{v_i,a^*,t}$ at least $u_{v_i,a'',t}$. If $a' \neq a''$, then $r_{v_i,t}$ satisfies:

\begin{equation*}\label{equ:lemma21}
\begin{aligned}
&u_{v_i,a^*,t}\geq u_{v_i,a'',t}\\ 
\Rightarrow &r_{v_i,t}+p_{v_i,a^*}+k_{v_i,a^*,t}\geq p_{v_i,a''}+k_{v_i,a'',t}\\
\Rightarrow &r_{v_i,t} \geq p_{v_i,a''}-p_{v_i,a^*}+k_{v_i,a'',t}-k_{v_i,a^*,t}
\end{aligned}
\end{equation*}
Let $r_{v_i,t}<p_{v_i,a'} - p_{v_i,a^*}$, then we have:
\begin{equation*}
\begin{aligned}
&p_{v_i,a'} - p_{v_i,a^*}>r_{v_i,t}\geq p_{v_i,a''}-p_{v_i,a^*}+k_{v_i,a'',t}-k_{v_i,a^*,t}\\
\Rightarrow &p_{v_i,a'} + k_{v_i,a^*,t} > p_{v_i,a''}+k_{v_i,a'',t}=u_{v_i,a'',t}>p_{v_i,a'} + k_{v_i,a',t}\\
\Rightarrow &k_{v_i,a^*,t}>k_{v_i,a',t}
\end{aligned}
\end{equation*}
Whereas if $a' = a''$, $r_{v_i,t}$ should be at least the difference between $u_{v_i,a^*,t}$ and $u_{v_i,a',t}$. Similar to the case when $a'\neq a''$, we have:

\begin{equation*}\label{equ:lemma22}
\begin{aligned}
&u_{v_i,a^*,t}\geq u_{v_i,a',t}\\ 
\Rightarrow &r_{v_i,t}+p_{v_i,a^*}+k_{v_i,a^*,t}\geq p_{v_i,a'}+k_{v_i,a',t}\\
\Rightarrow &p_{v_i,a'} - p_{v_i,a^*}>r_{v_i,t}\geq p_{v_i,a'}-p_{v_i,a^*}+k_{v_i,a',t}-k_{v_i,a^*,t}\\
\Rightarrow &k_{v_i,a^*,t}>k_{v_i,a',t}
\end{aligned}
\end{equation*}
Therefore, the lemma is proofed.
\end{proof}

\begin{lemma}\label{lemma:3}
Suppose $N_{v_i}^{in} \neq \emptyset$. If $v_i$ is incentivized when $r_{v_i,t}>p_{v_i,a'} - p_{v_i,a^*}$, then it must satisfy that $k_{v_i,a^*,t}<k_{v_i,a',t}$.
\end{lemma}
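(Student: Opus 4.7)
The plan is to mirror Lemma 2's proof line by line, reversing every comparison between $r_{v_i,t}$ and the no-influence threshold $p_{v_i,a'} - p_{v_i,a^*}$. Operationally, by Equation \ref{equ:beh}, $v_i$ being incentivized at time $t$ means $u_{v_i,a^*,t} \geq u_{v_i,a_m,t}$ for every $a_m \in A$, and in particular $u_{v_i,a^*,t} \geq u_{v_i,a'',t}$ with $a''$ the pre-incentive utility maximizer.

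Expanding this inequality through Equation \ref{equ:uti} gives the lower bound $r_{v_i,t} \geq p_{v_i,a''} - p_{v_i,a^*} + k_{v_i,a'',t} - k_{v_i,a^*,t}$. Transitively combining with the hypothesis $r_{v_i,t} > p_{v_i,a'} - p_{v_i,a^*}$ (read, in parallel with Lemma 2, as the incentive used to induce $a^*$ strictly exceeding the no-influence threshold) and cancelling $-p_{v_i,a^*}$ yields $u_{v_i,a'',t} = p_{v_i,a''} + k_{v_i,a'',t} > p_{v_i,a'} + k_{v_i,a^*,t}$. Following Lemma 2, I then split into two sub-cases. When $a' = a''$, the identity $u_{v_i,a'',t} = p_{v_i,a'} + k_{v_i,a',t}$ applied to the display above and cancellation of $p_{v_i,a'}$ immediately deliver $k_{v_i,a^*,t} < k_{v_i,a',t}$.

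The hard part will be the sub-case $a' \neq a''$. In Lemma 2 the chain $p_{v_i,a'} + k_{v_i,a^*,t} > u_{v_i,a'',t} > p_{v_i,a'} + k_{v_i,a',t}$ collapses by transitivity; after the sign reversal the analogous chain is $u_{v_i,a'',t} > p_{v_i,a'} + k_{v_i,a^*,t}$ together with $u_{v_i,a'',t} \geq p_{v_i,a'} + k_{v_i,a',t}$, which bounds both quantities of interest from above via $u_{v_i,a'',t}$ and so leaves $k_{v_i,a^*,t}$ and $k_{v_i,a',t}$ unordered by chaining alone. To close this sub-case I plan to also invoke the companion constraint $u_{v_i,a^*,t} \geq u_{v_i,a',t}$, which any incentivizing $r_{v_i,t}$ must satisfy, and repeat the $a' = a''$ derivation with $a'$ substituted directly for $a''$. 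I expect the completion of the argument to hinge on reading the hypothesis as pinning down the minimum required incentive, so that the $u_{v_i,a^*,t} \geq u_{v_i,a',t}$ constraint can be combined with $r_{v_i,t} > p_{v_i,a'} - p_{v_i,a^*}$ into a strict ordering of $k_{v_i,a^*,t}$ and $k_{v_i,a',t}$, rather than a pair of compatible lower bounds on $r_{v_i,t}$ that leave the two influence terms unrelated.
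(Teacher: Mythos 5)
Your handling of the sub-case $a'=a''$ matches the paper's, and like the paper's it only goes through under the reading you eventually settle on: the hypothesis must be taken as a statement about the \emph{minimum} incentive that achieves incentivization (i.e., the bound $p_{v_i,a'}-p_{v_i,a^*}+k_{v_i,a',t}-k_{v_i,a^*,t}$ itself exceeds $p_{v_i,a'}-p_{v_i,a^*}$); under the literal reading you only have two lower bounds on $r_{v_i,t}$, which, as you note, order nothing. The genuine gap is your plan for $a'\neq a''$. There the minimum incentive is pinned down by the binding alternative $a''$: it equals $p_{v_i,a''}-p_{v_i,a^*}+k_{v_i,a'',t}-k_{v_i,a^*,t}$, while the companion constraint $u_{v_i,a^*,t}\geq u_{v_i,a',t}$ you want to reuse is slack (since $u_{v_i,a'',t}\geq u_{v_i,a',t}$). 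So even the minimum-incentive reading only tells you that the $a''$-bound exceeds $p_{v_i,a'}-p_{v_i,a^*}$; ``repeating the $a'=a''$ derivation with $a'$ in place of $a''$'' would require the $a'$-bound to exceed that threshold, which is exactly the conclusion and is not implied. In fact no reading you propose can close this sub-case, because the claim fails there: take $p_{v_i,a^*}=0.1$, $p_{v_i,a'}=0.5$, $p_{v_i,a''}=0.4$ and $k_{v_i,a^*,t}=0.3$, $k_{v_i,a',t}=0.1$, $k_{v_i,a'',t}=0.6$ (admissible, the weights sum to $1$). The smallest incentivizing $r_{v_i,t}$ is $0.6>0.4=p_{v_i,a'}-p_{v_i,a^*}$, yet $k_{v_i,a^*,t}>k_{v_i,a',t}$. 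What does follow from the hypothesis, using $p_{v_i,a'}\geq p_{v_i,a''}$, is $k_{v_i,a'',t}>k_{v_i,a^*,t}$, i.e., the conclusion with $a''$ in place of $a'$.

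For comparison, the paper's own proof stumbles at exactly the point you flagged as the hard part: in its $a'\neq a''$ chain it inserts $p_{v_i,a'}+k_{v_i,a',t}-k_{v_i,a^*,t}>p_{v_i,a'}$ --- which is precisely $k_{v_i,a',t}>k_{v_i,a^*,t}$, the statement to be proved --- with no justification, since the preceding line only bounds $p_{v_i,a''}+k_{v_i,a'',t}-k_{v_i,a^*,t}$ from below. So your diagnosis of where the difficulty sits is accurate, but your proposed completion does not (and, as the example shows, cannot) close it; the sub-case needs either the restriction to $a'=a''$ or a conclusion phrased in terms of $a''$.
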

\begin{proof}
Similar to the proof in Lemma \ref{lemma:2}. If $a'=a''$, we have:
\begin{equation}
\begin{aligned}
&u_{v_i,a^*,t}\geq u_{v_i,a',t}\\
\Rightarrow &r_{v_i,t}\geq p_{v_i,a'}-p_{v_i,a^*}+k_{v_i,a',t}-k_{v_i,a^*,t}>p_{v_i,a'} - p_{v_i,a^*}\\
\Rightarrow &k_{v_i,a',t}>k_{v_i,a^*,t}
\end{aligned}
\end{equation}
Otherwise, if $a'\neq a''$, we have:
\begin{equation}
\begin{aligned}
&u_{v_i,a^*,t}\geq u_{v_i,a'',t}\\
\Rightarrow &r_{v_i,t}\geq p_{v_i,a''}-p_{v_i,a^*}+k_{v_i,a'',t}-k_{v_i,a^*,t}>p_{v_i,a'} - p_{v_i,a^*}\\
\Rightarrow &p_{v_i,a''}+k_{v_i,a'',t}-k_{v_i,a^*,t}>p_{v_i,a'}\\
\Rightarrow &p_{v_i,a''}+k_{v_i,a'',t}-k_{v_i,a^*,t}>p_{v_i,a'}+k_{v_i,a',t}-k_{v_i,a^*,t}>p_{v_i,a'}\\
\Rightarrow &k_{v_i,a',t}>k_{v_i,a^*,t}
\end{aligned}
\end{equation}
Therefore, this lemma is proofed.
\end{proof}

\begin{theorem}\label{theorem:1}
In the ADM, incentivizing influential users in social networks can help influence their neighbors' behaviors as well as save the limited budget.
\end{theorem}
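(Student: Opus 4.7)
The plan is to assemble Lemmas \ref{lemma:1}--\ref{lemma:3} into a two-part argument, corresponding to the two claims embedded in the theorem: (i) that incentivizing a user with many outgoing neighbors propagates $a^{*}$-favorable social influence, and (ii) that this propagation reduces the per-user incentive required downstream, hence saves budget. Concretely, call a user $v_i$ influential when $|N_{v_i}^{out}|$ (and the associated weights $w_{ij}$) are large; the proof will show that spending budget on such a user at time $t$ strictly decreases the budget needed at time $t{+}1$ for every $v_j \in N_{v_i}^{out}$ whose situation gets pushed from the regime of Lemma~\ref{lemma:3} into that of Lemma~\ref{lemma:2}.

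First I would establish the propagation half. Suppose $v_i$ is incentivized at time $t$, so $s_{v_i,t}=a^{*}$. By Equation~\ref{equ:inf}, for every $v_j \in N_{v_i}^{out}$ the term $w_{ij}$ now contributes to $k_{v_j,a^{*},t+1}$ rather than to $k_{v_j,a_m,t+1}$ for any $a_m \neq a^{*}$. Plugging into Equation~\ref{equ:uti}, $u_{v_j,a^{*},t+1}$ rises by $w_{ij}$ relative to the counterfactual in which $v_i$ had chosen some $a_m \neq a^{*}$, and this raise is monotone in the influence strength, so the chance that $v_j$ selects $a^{*}$ via Equation~\ref{equ:beh} strictly increases. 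Iterating this one-step argument along outgoing edges (an informal cascade, not an induction I need to close rigorously) gives the ``influence their neighbors'' half of the theorem.

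For the budget half, I would compare the three regimes identified earlier. From Lemma~\ref{lemma:1}, when $v_j$ is socially isolated the threshold incentive is exactly $p_{v_j,a'} - p_{v_j,a^{*}}$; Lemmas~\ref{lemma:2} and~\ref{lemma:3} then show that this baseline is shifted downward precisely when $k_{v_j,a^{*},t+1} > k_{v_j,a',t+1}$ and upward when the inequality is reversed. Incentivizing an influential $v_i$ at time $t$ shifts mass from $k_{v_j,a',t+1}$ (or from some other $k_{v_j,a_m,t+1}$) into $k_{v_j,a^{*},t+1}$ for each $v_j \in N_{v_i}^{out}$; whenever this shift is large enough to flip the sign of $k_{v_j,a^{*},t+1} - k_{v_j,a',t+1}$, Lemma~\ref{lemma:2} applies and the required $r_{v_j,t+1}$ drops strictly below $p_{v_j,a'} - p_{v_j,a^{*}}$, whereas without the shift Lemma~\ref{lemma:3} would have forced $r_{v_j,t+1}$ strictly above that value. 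Summing the per-neighbor savings and subtracting the one-time cost $r_{v_i,t}$ paid to $v_i$ yields a net budget decrease as $|N_{v_i}^{out}|$ grows, which formalizes ``save the limited budget.''

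The main obstacle is that the theorem is stated qualitatively (``can help''), so before anything else I must pin down a precise formal version. The cleanest version I would target is: there exist constants $C_1,C_2>0$ (depending on the preference gaps and the $w_{ij}$) such that if $v_i$ is incentivized at time $t$ with cost $r_{v_i,t}$ and its outgoing influence satisfies $\sum_{v_j\in N_{v_i}^{out}} w_{ij} \ge C_1$, then the total incentive bill at time $t{+}1$ is reduced by at least $C_2 \sum_{v_j\in N_{v_i}^{out}} w_{ij} - r_{v_i,t}$ compared to not incentivizing $v_i$. The delicate part is not the algebra inside any single Lemma~\ref{lemma:2}/\ref{lemma:3} application, but handling ties, the case $a' = a''$ versus $a' \neq a''$ simultaneously across many neighbors, and ruling out pathological cases where $v_i$ would have chosen $a^{*}$ anyway (making the incentive wasted); I expect to dispatch these by restricting attention to neighbors for whom the influence shift actually changes which of Lemmas~\ref{lemma:2} and~\ref{lemma:3} governs their required incentive.
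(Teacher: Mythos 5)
Your argument is essentially the paper's own: the paper proves Theorem \ref{theorem:1} in a single sentence by invoking Lemmas \ref{lemma:1}--\ref{lemma:3} to note that when a neighbor's incoming influence favors $a^*$, the incentive required falls below the no-influence baseline $p_{v_i,a'}-p_{v_i,a^*}$ of Lemma \ref{lemma:1}, which is exactly your propagation-plus-regime-shift reasoning. Since the theorem only makes the qualitative claim ``can help,'' the quantitative strengthening with constants $C_1,C_2$ that you flag as delicate is not needed, and the paper does not attempt it.
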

\begin{proof}
According to Lemmas \ref{lemma:1}, \ref{lemma:2}, and \ref{lemma:3}, when the influence exerted from $v_i$'s incoming neighbors is supporting select $a^*$, the incentive provider is possibly to incentivize $v_i$ by providing incentives less than the value when social influence does not exist. 
\end{proof}

Theorem \ref{theorem:1} reveals that (1) incentivizing influential users to affect their neighbors can improve the efficiency of the budget use, and (2) an adaptive pricing mechanism is necessary for determining the value of incentives. In the next sections, we will introduce the proposed two approaches for discovering influential users as well as incentive allocation, respectively.

\section{Influence-aware Incentive Allocation Model}\label{sec:approach}

\subsection{The IUD Algorithm}\label{subsec:iud}
As aforementioned, in this study, we consider the incentive allocation problem in scenarios where the topology of a social network is unknown at all. To make the incentive allocation effective in such scenarios, we propose a heuristic algorithm, called Influential Users Discovery (IUD), to discover potential influential users for incentivizing them to positively affect their neighbors' behaviors.

In general, if a user's behavior is always consistent with one adjacent user, there is a high chance that the user is influenced by the neighbor \cite{Yu2014Collective}. While the latest behaviors often have a higher influence than that of past behaviors, and users are more possibly affected by the influencer's recent behaviors in the reality \cite{Li2019Automated}. Thus, the influence from influencer's historical behaviors still cannot be ignored. In this case, we first regard the probability that $v_i$'s behavior at time step $t'$ affects $v_j$'s behavior at time step $t$ satisfies the principle of natural decay \cite{Benevenuto2009Characterizing, Fang2013Predicting}. Equation \ref{equ:P1} describes the attenuation, where $s_{v_i,t'}$ and $s_{v_j,t}$ denote behaviors of $v_i$ and $v_j$, respectively, and $\lambda$ denotes the attenuation constant. The probability is maximum when $t'=t-1$, and starts to decrease over time. The probability tends to 0 when the difference between $t$ and $t'$ becomes huge, so that the influence of the distant behaviors can be ignored. Concurrently, we regard $s_{v_i,t'}$ has not influence exerted on $s_{v_j,t}$ if $s_{v_i,t'} \neq s_{v_j,t}$. 

\begin{equation}\label{equ:P1}
P(s_{v_j,t}| s_{v_i,t'}) =\left\{ 
\begin{array}{ll}
e^{-\lambda \cdot (t-t')}&, s_{v_i,t'} =s_{v_j,t}\\
0 &, s_{v_i,t'} \neq s_{v_j,t}\\
\end{array}
\right.
\end{equation}

The probability $P(s_{v_j,t}|v_i)$ that $v_i$ influences $v_j$'s behavior at time step $t$ can be derived from the weighted average of $P(s_{v_j,t}| s_{v_i,t'})$. Specifically, if $v_j$'s behavior is consistent with $v_i$'s historical behaviors, $P(s_{v_j,t}|v_i)$ would tend to 1. It implies that $s_{v_j,t}$ is very possibly affected by $v_i$. $P(s_{v_j,t}|v_i)$ can be formulated by using Equation \ref{equ:P2}, where $t'$ denotes a historical time step.

\begin{equation}\label{equ:P2}
P(s_{v_j,t}|v_i) = \frac{\sum_{\substack{
 								s_{v_j,t}= s_{v_i,t'}\\
 								s_{v_i,t'} \in S_{v_i}\\
 							}}{P(s_{v_j,t}| s_{v_i,t'})}}
 						{\sum_{\substack{
 								s_{v_i,t'} \in S_{v_i}\\
 							}}{P(s_{v_j,t}| s_{v_i,t'})}}
\end{equation}

According to the probabilities that $v_i$ affects $v_j$'s specific behaviors, the probability that $v_i$ affects $v_j$ can be obtained from the average of $P(s_{v_j,t}|v_i)$, calculated by using Equation \ref{equ:P3}, where $|S_{v_j}|$ denotes the number of transactions of $v_j$'s historical behavior records.
\begin{equation}\label{equ:P3}
P(v_j|v_i) = \frac{\sum_{s_{v_j,t} \in S_{v_j}\\}{P(s_{v_j,t}|v_i)}}{|S_{v_j}|}
\end{equation}

After estimating the probability that $v_i$ affects $v_j$, the influential degree $\theta_{v_i}$ of $v_i$ can be formulated by using Equation \ref{equ:theta}, where $|V|$ denotes the number of users in the network.
\begin{equation}\label{equ:theta}
\theta_{v_i} = \frac{\sum_{v_j\in V\backslash\{v_i\}}{P(v_j|v_i)}}{|V|-1}						
\end{equation}

\begin{algorithm}[t]
  \caption{The IUD Algorithm}
  \label{alg:1}
  \KwIn {The set of users $V$, time step $t$}
  \KwOut {All users' influential degree $\theta_{v_i}, \forall v_i\in V$}
  	\For {$v_i \in V$ at time step $t$}{
  		Waiting for $v_i$ taking action\;
  		\For {$v_j \in V \backslash \{v_i\}$}{
  			\For{$s_{v_i,t}\in S_{v_i}$, $t'\in[1,t]$}{ 			
  				\For{$s_{v_j,t'}\in S_{v_j}$, $t'\in[0,t-1]$}{
  					Compute $P(s_{v_i,t}| s_{v_j,t'})$ using Equation \ref{equ:P1};\\
  				}
  				Compute $P(s_{v_i,t}|v_j)$ using Equation \ref{equ:P2};\\
  			}
  			Compute $P(v_i|v_j)$ using Equation \ref{equ:P3};\\
  		}
  		
  	}
  	\textbf{In the end of time step $t$:}\\
  	\For{$v_i \in V$}{
  		Compute $\theta_{v_i}$ using Equation \ref{equ:theta};\\
  	}
\end{algorithm} 

The detailed process of the IUD is described in Algorithm \ref{alg:1}. Line 2 obtains $v_i$'s behavior at time step $t$. Lines 3-6 aim to compute the probability that $v_i$'s behavior at time step $t$ is influenced by $v_j$'s behavior at time step $t'$. Line 7 computes the probability that $v_j$ affects $v_i$'s behavior at time step t, and Line 8 calculates the probability that $v_j$ affects $v_i$. Then in the end of time step $t$, Lines 10-11 update all users' influential degree $\theta_{v_i}$ by using Equation \ref{equ:theta}. 

The computational complexity of this algorithm is mainly determined by the number of users in the network and the volume of users' historical behavior records. Hence, the computational complexity is $O(n^2)$.

\subsection{The DGIA Algorithm} \label{subsec:dgia}
Under a limited budget, an optimal incentive allocation is to incentivize more users by providing incentives as little as possible. However, due to the uncertainty of social influence, the minimum incentive for incentivizing the user can be variable and unknown. Hence, to effectively allocate incentives, we design a Dynamic Game-based Incentives Allocation (DGIA) algorithm, where the interaction between a user agent and the system is modeled as a dynamic game. The system analyzes users' incentive sensitivity based on their behaviors, and then adjusts the value of the incentive.

\begin{figure}[tb]
\centering
\includegraphics[width=\textwidth]{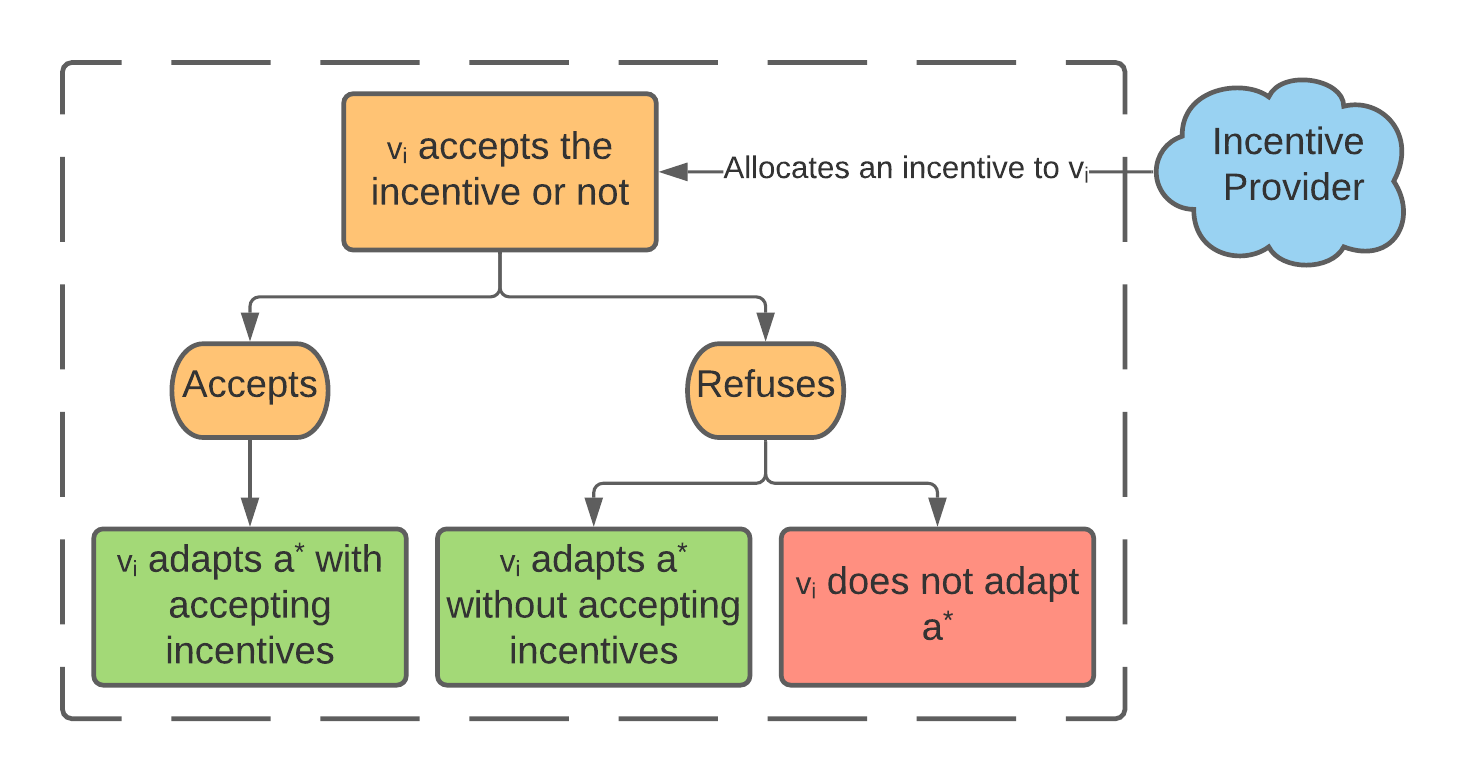}
\caption{$v_i$'s decision process}
\label{fig:dt}
\end{figure}

Standing on the perspective of the system, a user agent's decision process can be described by using Figure \ref{fig:dt}. The incentive provider provides an incentive to the user $v_i$ at the beginning for incentivizing $v_i$ to adopt $a^*$. Subsequently, $v_i$ makes a decision on whether accept the incentive or not based on its own utility. Note that, $v_i$ must adopt $a^*$ if the incentive is accepted. While, if $v_i$ refuses the incentive, i.e., $r_{v_i,t}$ in Equation \ref{equ:uti} is 0, then $v_i$'s utility is determined by only preferences and social influence. Due to the uncertainty of social influence in unknown social networks, we use Equation \ref{equ:omega} to formulate the probability that $v_i$ selects $a^*$ without accepting incentives, where $\omega_{v_i}$ is the ratio of $p_{v_i,a^*}$ to the sum of all $v_i$' preferences. If $p_{v_i,a^*}$ is dominant among preferences of the other $a_m$, the user is highly possible to select $a^*$ without any incentives, and otherwise conversely. Therefore, the probability that $v_i$ does not adopt $a^*$ is $1-\omega_{v_i}$. Meanwhile, we regard incentive sensitivity $\rho_{v_i,t-1}$ as the probability that $v_i$ accepts the incentive. 

\begin{equation}\label{equ:omega}
\omega_{v_i}=\frac{p_{v_i,a^*}}{\sum\limits_{a_m \in A} p_{v_i,a_m}}
\end{equation}

Based on $v_i$'s behavior, we can update $v_i$'s incentive sensitivity by using Equation \ref{equ:rho}, where $\gamma$ is a constant coefficient used to decrease incentive sensitivity. Here, we regard $\rho_{v_i,t-1}$ as a prior probability while $\rho_{v_i,t}$ as a posterior probability, representing the estimation of $v_i$'s incentive sensitivity. When $v_i$ selects $a^*$, $\rho_{v_i,t}$ will keep increasing and approaching to 1. Conversely, if $v_i$ is not incentivized, $\rho_{v_i,t}$ would decrease correspondingly. The value of $\gamma$ needs to be considered carefully, since too small $\gamma$ would waste budget for re-exploration, while too large $\gamma$ may cause the new incentive still fails to incentivize $v_i$.

\begin{equation}\label{equ:rho}
\rho_{v_i,t} =\left\{ 
\begin{array}{ll}
\frac{\rho_{v_i,t-1}}{\rho_{v_i,t-1}+\omega_{v_i}\cdot(1-\rho_{v_i,t-1})}&,s_{v_i,t} = a^*\\
\gamma\rho_{v_i,t-1}&,s_{v_i,t} \neq a^*\\
\end{array}
\right.
\end{equation}

The value of incentive $r_{v_i,t}$ can be calculated by using Equation \ref{equ:r}, where $\rho_{v_i,t-1}$ denotes $v_i$'s incentive sensitivity, $\theta_{v_i}$ denotes $v_i$'s influential degree, $\mu_{t-1}$ denotes the percentage of incentivied users at last time step, and $p_{v_i,a'}-p_{v_i,a^*}$ denotes the difference between $v_i$'s highest preference and the preference towards $a^*$. Overall, it can be seen that the value of $r_{v_i,t}$ is determined by following these rules: (1) The user who has larger dissatisfaction to $a^*$ may gain more incentives; (2) The user who has stronger influential degree may receive more incentives; (3) The user who is easier incentivized would receive fewer incentives; (4) Users would receive fewer incentives when the percentage of incentivized users is higher. 

\begin{equation}\label{equ:r}
r_{v_i,t}= (1-\rho_{v_i,{t-1}})\cdot((p_{v_i,a'}-p_{v_i,a^*})^{\mu_{t-1}}+(\theta_{v_i})^{\mu_{t-1}})
\end{equation}

\begin{algorithm}[t]
  \caption{The DGIA Algorithm}
  \label{alg:DGIA}
  \KwIn{The set of users $V$, budget $B$, time step $t$}
  \KwOut{Incentives for each user $r_{v_i,t}, \forall v_i \in V$}
  Initialize $B_t = B$\;
  	\For {$v_i \in V$ at time step $t$}{
  		Compute $r_{v_i,t}$ using Equation \ref{equ:r}\;
  		\If{$B_t<r_{v_i,t}$}{
  			$r_{v_i,t}=B_t$\;
  		}
		Allocate $r_{v_i,t}$ to $v_i$ and observe $s_{v_i,t}$\;
		\If{$s_{v_i,t} == a^*$}{
			$B_t = B_t - r_{v_i,t}$\;	
		}
		Compute $\rho_{v_i,t}$ using Equation \ref{equ:rho}\;
  	}
  	$sum:=0$\;
  	\For {$v_i \in V$ }{
  		Add $s_{v_i,t}$ into $S_{v_i}$\;
		\If{$s_{v_i,t}==a^*$}{
			$sum=sum+1$\;		
		}
  	}
  	$\mu_t:=sum/|V|$\;
  	Sort $V$ in descending order based on the sum of each user's influential degree and incentive sensitivity;\\
\end{algorithm}

The DGIA algorithm is displayed in Algorithm \ref{alg:DGIA}. Line 1 initializes the budget $B_t$ at time step $t$. Lines 3-5 calculate the value of the incentive, and line 6 allocates the incentive to $v_i$ and observes $v_i$'s behavior. According to $v_i$'s behavior, Lines 7-8 update remaining budget $B_t$ and Line 9 updates $v_i$'s incentive sensitivity. At the end of time step $t$, the GAUP would be calculated in Lines 10-15. Line 16 sorts all users in descending order based on users' influential degree and incentive sensitivity. It indicates that users who are influential and easily incentivized would have priority to be provided with incentives. As the worst-case time complexity of Algorithm \ref{alg:DGIA} is mainly determined by the loops started from Line 2 and Line 11, respectively. Therefore, the complexity is only $O(n)$. 

\iffalse
For convenient understanding, the ensemble approach of IUD and DGIA is described in Algorithm \ref{alg:ensemble}. 

\begin{algorithm}[t]
  \caption{IUD+DGIA}
  \small
  \label{alg:ensemble}
	\For{$t \in [1,T]$}{  
  		Initialize $B_t = B$\;
  		\For {$v_i \in V$ at time step $t$}{
			Allocate $r_{v_i,t}$ to $v_i$ and observe $s_{v_i,t}$\;
			Update remaining budget $B_t$;\\
			Update $\rho_{v_i,t}$ and calculate $P(v_i|v_j)$;
  		}
  		Update influential degree for all $v_i \in V$;\\
  		Calculate GAUP $\mu_t$;\\
  		
  	}

\end{algorithm}
\fi

\section{Experiments and Analysis}\label{sec:exp}
We conducted two major experiments for this research work. The first one aims to evaluate the performance of the proposed approaches in static social networks, i.e., the topology of the network is unchanged. In the second experiment, we further evaluate the proposed approaches in dynamic social networks, where users may join or quit the social network.
\subsection{Experimental Setup}
To evaluate the performance of the proposed approach, we deployed both synthetic and real-world datasets to construct social networks and conducted experiments. Before demonstrating the experimental results, we first illustrate the experimental setup, including the statistics of datasets, details of compared approaches, and initialization of parameters.
\subsubsection{Datasets}
In the experiments, the following real-world datasets are used. The statistics of these datasets are listed in Table \ref{tab:statistic}. The ACC at the last column denotes the average clustering coefficient which reflects the density of the network.

\begin{itemize}
\item \textbf{Facebook}\footnote{https://snap.stanford.edu/data/ego-Facebook.html} dataset includes 10 anonymized ego-networks, consisting of 4039 nodes and 88234 edges. The average number of edges is 21.8, and the average clustering coefficient is 0.606 \cite{Leskovec2012Learning}.

\item \textbf{Twitter}\footnote{https://snap.stanford.edu/data/ego-Twitter.html} dataset contains 973 networks, 81306 users and 1768149 edges. To diminish the computing time, in the experiments, we adopt one of the networks, which consists of 236 users and 2478 edges. The average number of edges is 10.5, and the average clustering coefficient is 0.303 \cite{Leskovec2012Learning}. 

\item \textbf{Wiki}\footnote{https://snap.stanford.edu/data/soc-sign-bitcoin-otc.html} dataset contains administrator elections and vote history data from 3 January 2008. There are 2794 elections with 103663 total votes and 7066 users participating in the elections. In the dataset, nodes refer to wikipedia users and edges represent votes from one user to another. The average number of edges is 14.6, and the average clustering coefficient is 0.14 \cite{Leskovec2010Signed}.

\item \textbf{Email}\footnote{https://snap.stanford.edu/data/email-Eu-core.html} dataset generates a network using email communication from a large European research insititution, which contains 1005 nodes and 25571 edges. In the dataset, nodes refer to users and edges indicate that at least one email was sent. The average number of edges is 25.4, and the average clustering coefficient is 0.399 \cite{Leskovec2007Graph}.

\item \textbf{Haverford}\footnote{http://networkrepository.com/socfb-Haverford76.php} dataset is a social network extracted from Facebook consisting of 1446 users with 59589 edges representing friendship ties. The average number of edges is 41.2, and the average clustering coefficient is 0.323 \cite{Traud2012Social}.
\end{itemize} 
\begin{table}[!tb]
	\centering
	\scalebox{1}{
	\begin{tabular}{lllll}
		\hline
		Dataset&$|V|$&$|E|$&Avg. $|E|$&ACC\\
		\hline
		Facebook&4039&88234&21.8&0.606\\
		Twitter&236&2478&10.5&0.303\\
		Wiki&7115&103689&14.6&0.14\\
		Email&1005&25571&25.4&0.399\\
		Socfb&1446&59589&41.2&0.323\\
		\hline
	\end{tabular}
	}
	\caption{Statistics of Real-World Social Network Datasets}
	\label{tab:statistic}
\end{table}

The former four datasets are used to build four static networks, i.e., the topology of each network is fixed and would not be changed. While the last one dataset is used to form a dynamic network, where users may join the network and establish connections with existing users or exit the network at every time step. The configuration of the dynamic network would be introduced in Section \ref{subsec:dynamic}. 

Meanwhile, pre-processing data is necessary since these datasets only contain the nodes and unweighted edges. First, we assign a random weight for each edge, which satisfies that the sum of the weight of edges that connect to a user cannot exceed 1, i.e., $\sum_{v_j\in N_{v_i}^in}{w_{ji}}\leq 1$. Subsequently, towards each action option, we also assign a random preference from 0 to 1 for each user.

\subsubsection{Baseline Approaches}
We compare the proposed IUD and DGIA approaches with several incentive allocation approaches as follows:
\begin{itemize}
\item \textbf{DGIA} is a part of the proposed approach, which allocates incentives without considering users' influential degree.
\item \textbf{IPE+DGIA} is an ensemble approach that applies the IPE approach to discover influential users, and DGIA approach to allocate incentives. Different from the proposed IUD approach, IPE analyzes users' influential degree only based on their most recent behaviors \cite{Wu2019Adaptive}.
\item \textbf{DBP-UCB} is a bandit-based approach for engaging users to participate in the bike re-positioning process \cite{Singla2015Incentivizing}. DBP-UCB is a dynamic pricing mechanism, which can determine the incentives from a finite price list. According to the features of ADM model, we set nine price options from 0 to 2 with an interval of 0.25, i.e., $\{0, 0.25, 0.5, 0.75, 1, 1.25, 1.5, 1.75, 2.0\}$.
\item \textbf{Uniform Allocation} tends to provide fixed and uniform incentives to all users. The value of incentives is determined by the budget and the number of users.
\item \textbf{No Incentive} implies that the system would not incentivize users' behaviors. All user agents would make decisions based on their own preferences and exerted influence from neighbors. 
\end{itemize}

\subsubsection{System Setup}
We simulate the users' decision-making process under incentives and social influence by creating a number of user agents based on the datasets mentioned above. Each user agent maintains its own local information, including a list of neighbors the agent can influence, a list of neighbors who can influence the agent, preferences towards all available action options, and historical behavior records. Meanwhile, we use a system agent to represent the system for allocating incentives and collecting global information. 

Without specific clarification, we set $|A|=4$ by default, representing there are four possible action options. The budget provided at each time step is fixed and limited and is different for different networks. Each user agent's incentive sensitivity and influential degree are initialized as 0.5 and 0, respectively. The attenuation constant $\lambda$ is set as 0.1, and the coefficient $\gamma$ is set as 0.9. Furthermore, we assume that all user agents would join the system at time step 0, and start selecting an action from possible options from time step 1.

\subsubsection{Evaluation Metrics}
To better evaluate the performance of the proposed approach, two major evaluation metrics are considered as follows:
\begin{itemize}
\item \emph{Global Activated Users Percentage (GAUP)} computes the percentage of users who have been incentivized in a network, which has been formulated in Equation \ref{equ:mu}. Higher GAUP indicates more users in the network are incentivized.

\item \emph{Global Influenced Activation Coverage (GIAC)} computes the percentage of users who are incentivized due to social influence. As explained in Section \ref{subsec:adm}, when the provided incentive is insufficient, a user can still be incentivized if the influence from incoming neighbors who select $a^*$ exceeds the preference gap. Namely, the user cannot be incentivized without the influence in such scenarios. We use $\tau_t$ to represent the GIAC at time step $t$, and $\tau_t$ can be formulated by using Equation \ref{equ:GIAC}, where $p_{v_i,a'}$ denotes $v_i$'s highest preference among $a_m\in A$, the numerator denotes the number of users who are incentivized due to social influence, and the denominator denotes the number of users in the network.
\begin{equation}\label{equ:GIAC}
\tau_t = \frac{|\{v_i|s_{v_i,t}=a^*, r_{v_i,t}<p_{v_i,a'}-p_{v_i,a^*}\}|}{|V|}
\end{equation}
\end{itemize}

\subsection{Experimental Results on Static Networks}
In this experiment, we evaluate the performance of the proposed approach against other baseline approaches by using four real-world social networks, i.e., Facebook, Twitter, Wiki, and Email. To make the comparison meaningful, we provide the same but limited amount of budget to all approaches in different networks. 

\subsubsection{Performance on GAUP}
\begin{figure*}[!t]
  \centering
  \subfigure[Facebook, $B_t=200$]{\includegraphics[width=0.49\textwidth]{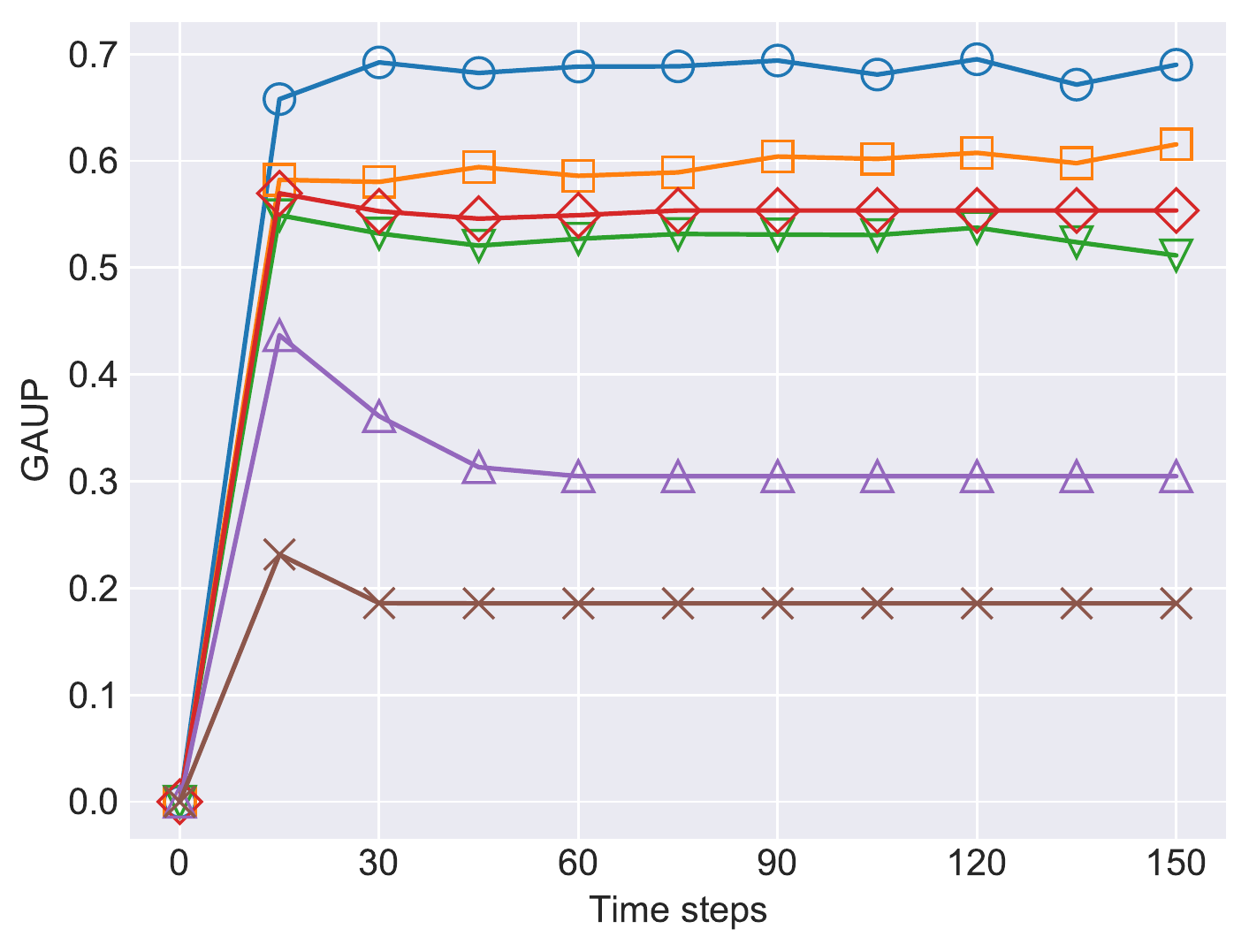}\label{fig:gaup-a}}
  \subfigure[Twitter, $B_t=20$]{\includegraphics[width=0.49\textwidth]{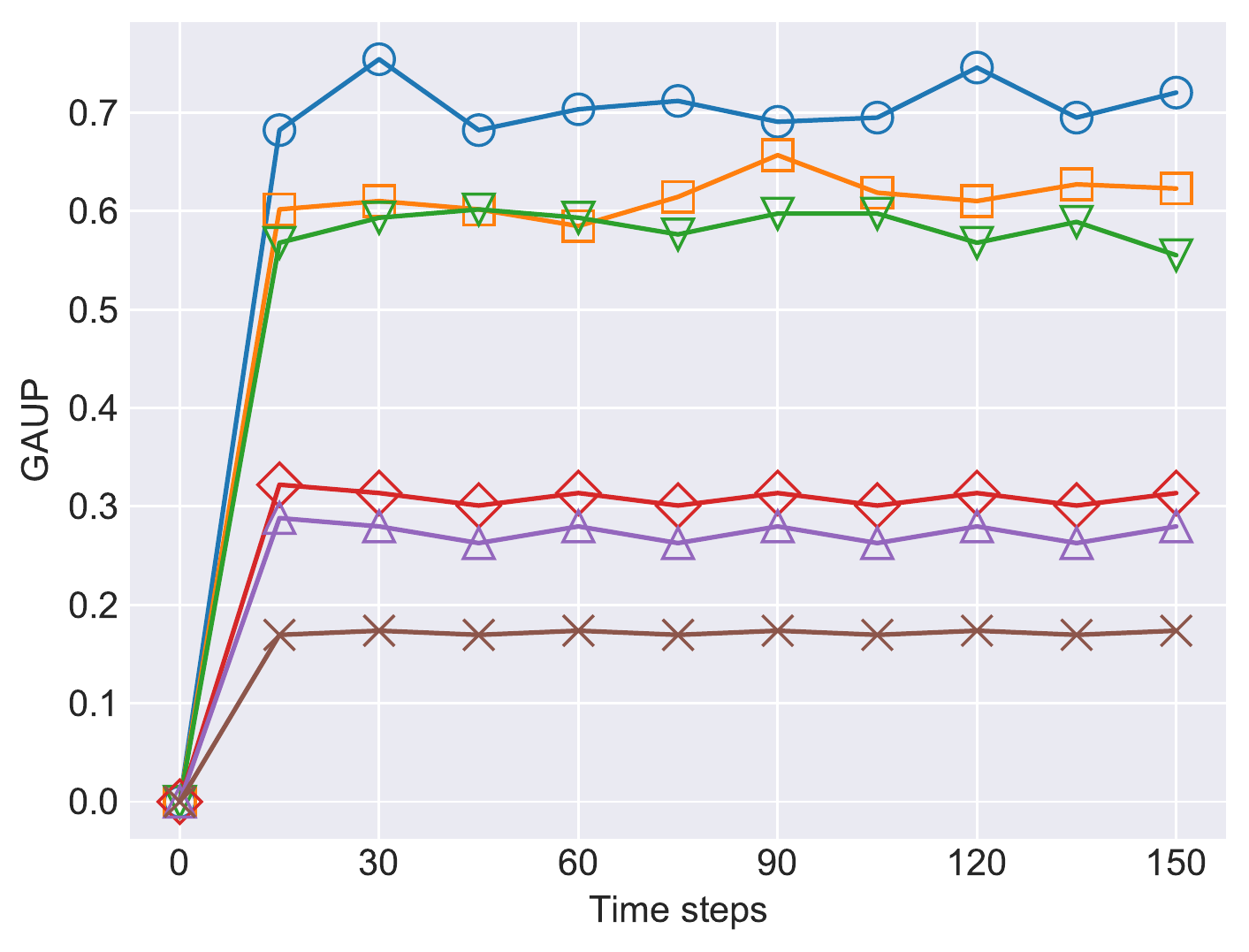}\label{fig:gaup-b}}\\
  \subfigure[Wiki, $B_t=700$]{\includegraphics[width=0.49\textwidth]{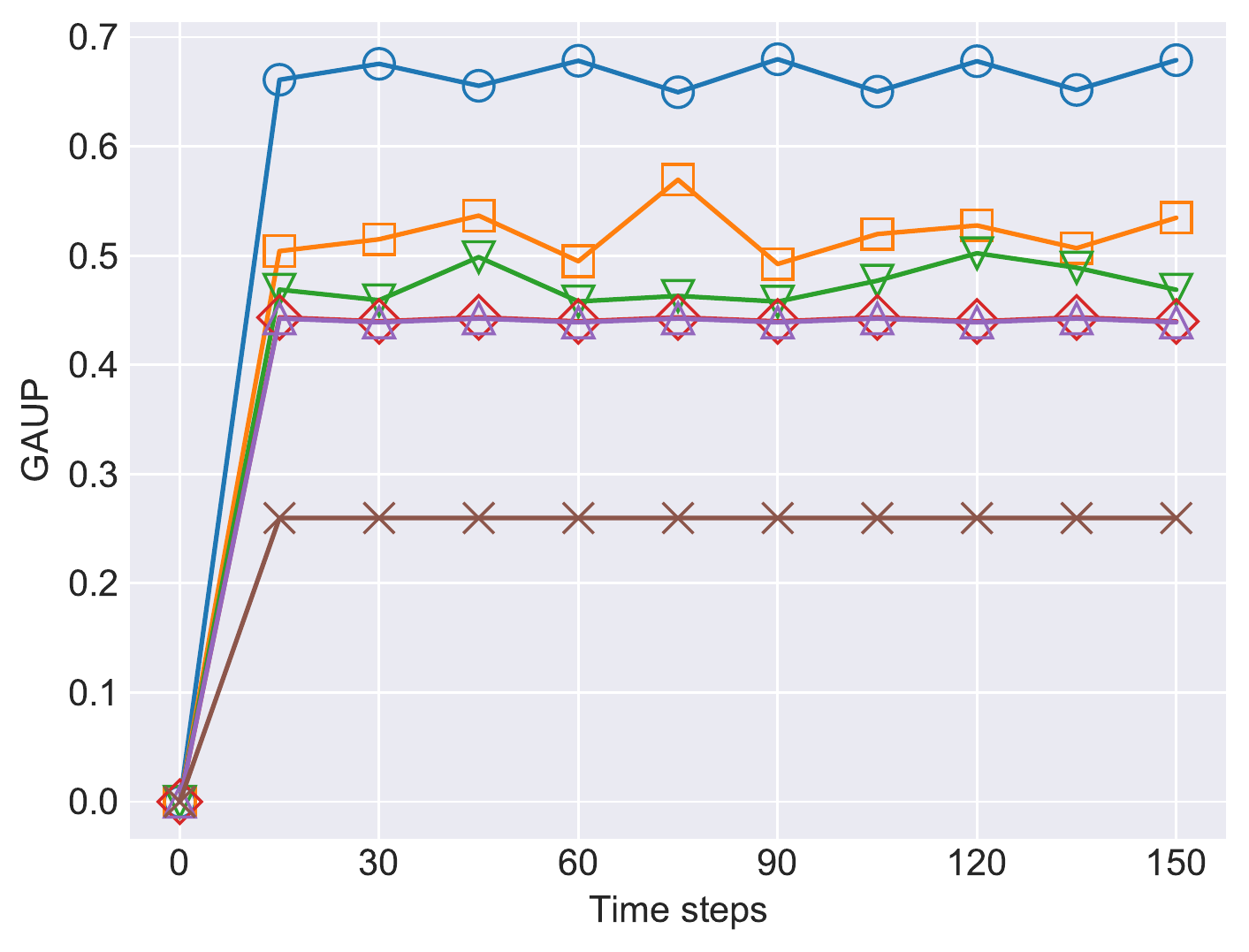}\label{fig:gaup-c}}
  \subfigure[Email, $B_t=50$]{\includegraphics[width=0.49\textwidth]{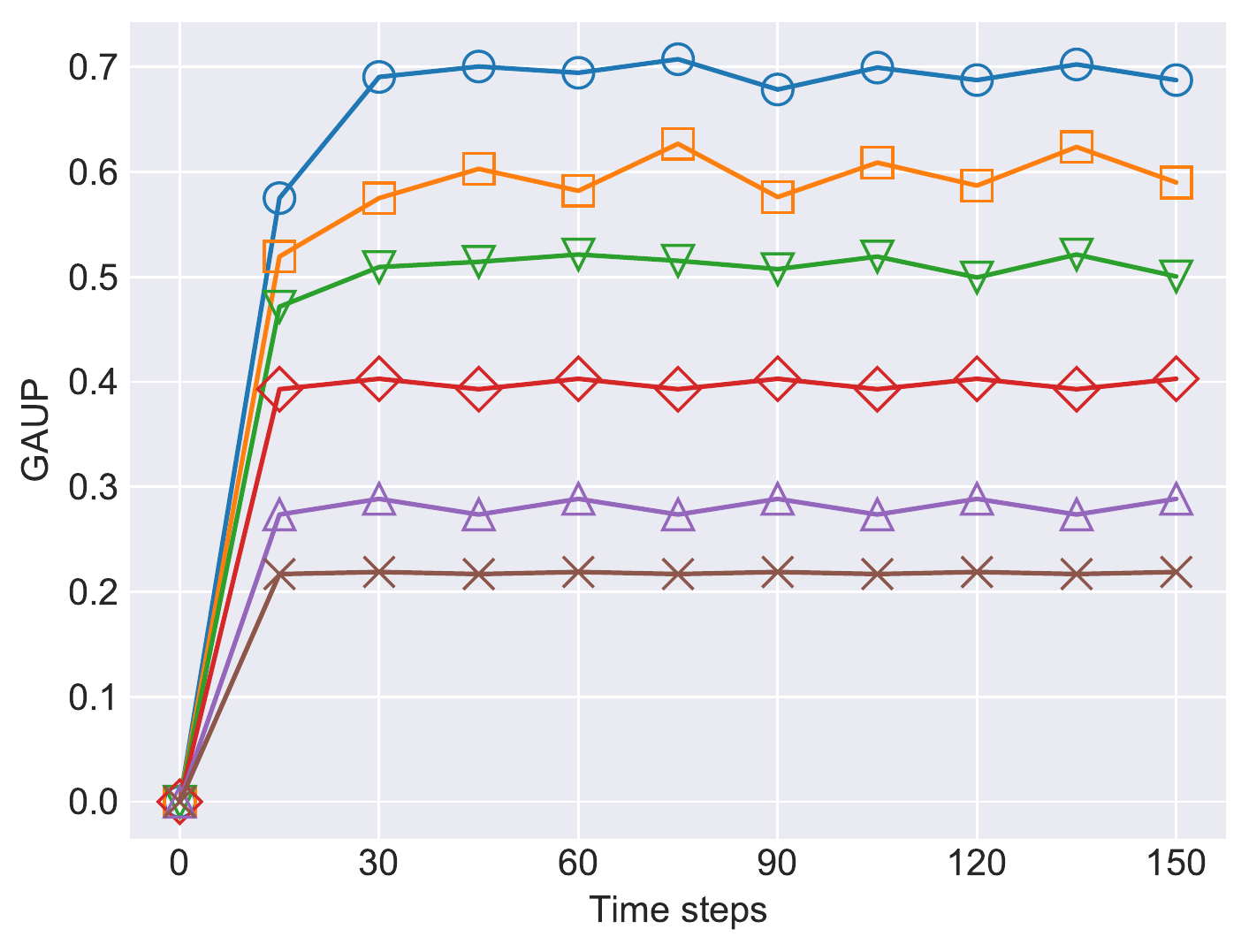}\label{fig:gaup-d}}\\
  \subfigure{\includegraphics[width=\textwidth]{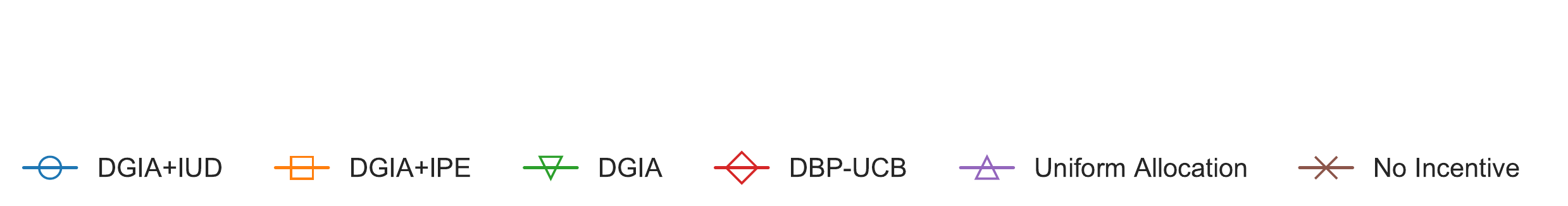}}
  \caption{Comparison of GAUP}
  \label{fig:gaup}
\end{figure*}

We first evaluate the performance by using GAUP. As we can observe from Figure \ref{fig:gaup}, IUD+DGIA can produce higher GAUP in all four networks. Meanwhile, IUD+DGIA can converge fast in a short time, and keep GUAP relatively stable. The gap between IUD+DGIA and DGIA implies that IUD is effective in engaging influential users, which effectively improves the overall performance. By contrast, the performance of IPE+DGIA is not always satisfying. Although IPE+DGIA can reach higher GAUP than the other four approaches in Facebook and Email networks, it produces similar results as that of DGIA in Twitter and Wiki networks. The reason is that IPE fails to discover correct influential users, and incorrectly regards non-influential users as influential users. 

Furthermore, among approaches that are not influence-aware, DGIA is superior to the other three approaches in Twitter and Email networks. However, the performance of DBP-UCB is close to DGIA in the Facebook network and even exceeds it in the Wiki network. This is due to the reason that DBP-UCB is a MAB-based approach that requires a sufficient budget to explore the optimal value of incentives, and the budget provided in Twitter and Email networks is insufficient for that. However, DGIA explores the proper value of incentives by analyzing users' incentive sensitivity, and also restricts incentives based on GAUP. This feature causes that DGIA is able to perform well when the budget is limited, but also limits the performance even the budget is very sufficient. Uniform approach can only improve limited GAUP due to the limited budget. Also, the performance of No Incentive approach demonstrates that only a few users would select $a^*$ without receiving incentives in four networks. The tendencies of these two approaches indicate that the necessity of deploying an adaptive incentive allocation approach. On the other hand, as we can see from Figures \ref{fig:gaup-a} and \ref{fig:gaup-b}, the tendencies of Uniform approach and No Incentive approach both rise at the beginning then fall down to a steady phase. It is due to that some influential users select $a^*$ and influence their neighbors to select $a^*$ in the beginning. However, some of these influential users are also influenced by other users who choose actions other than $a^*$. This phenomenon indicates that social influence plays a crucial role in affecting users' decision-making in the social network.

\begin{figure*}[!t]
  \centering
  \subfigure[Facebook, $B_t=200$]{\includegraphics[width=0.49\textwidth]{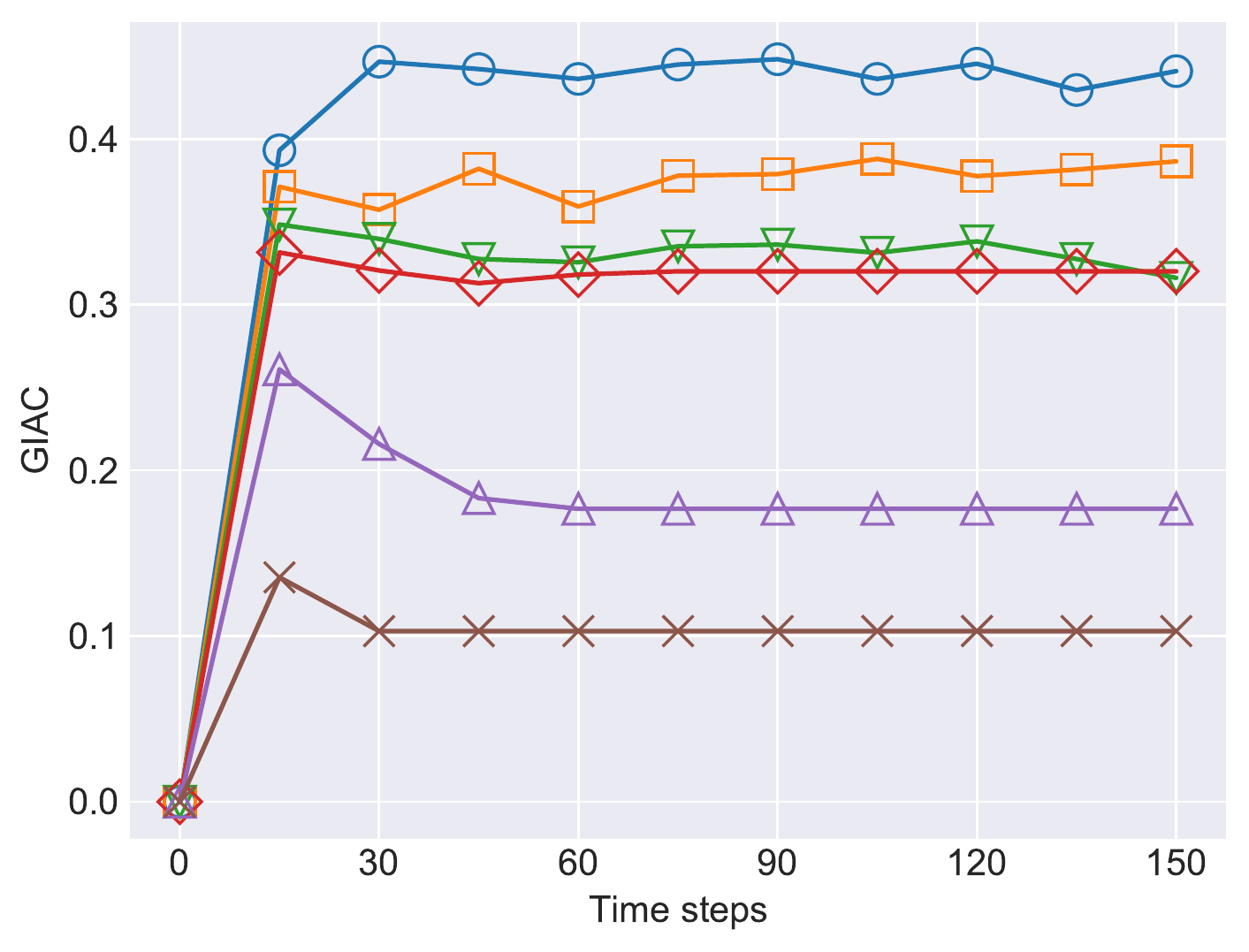}}
  \subfigure[Twitter, $B_t=20$]{\includegraphics[width=0.49\textwidth]{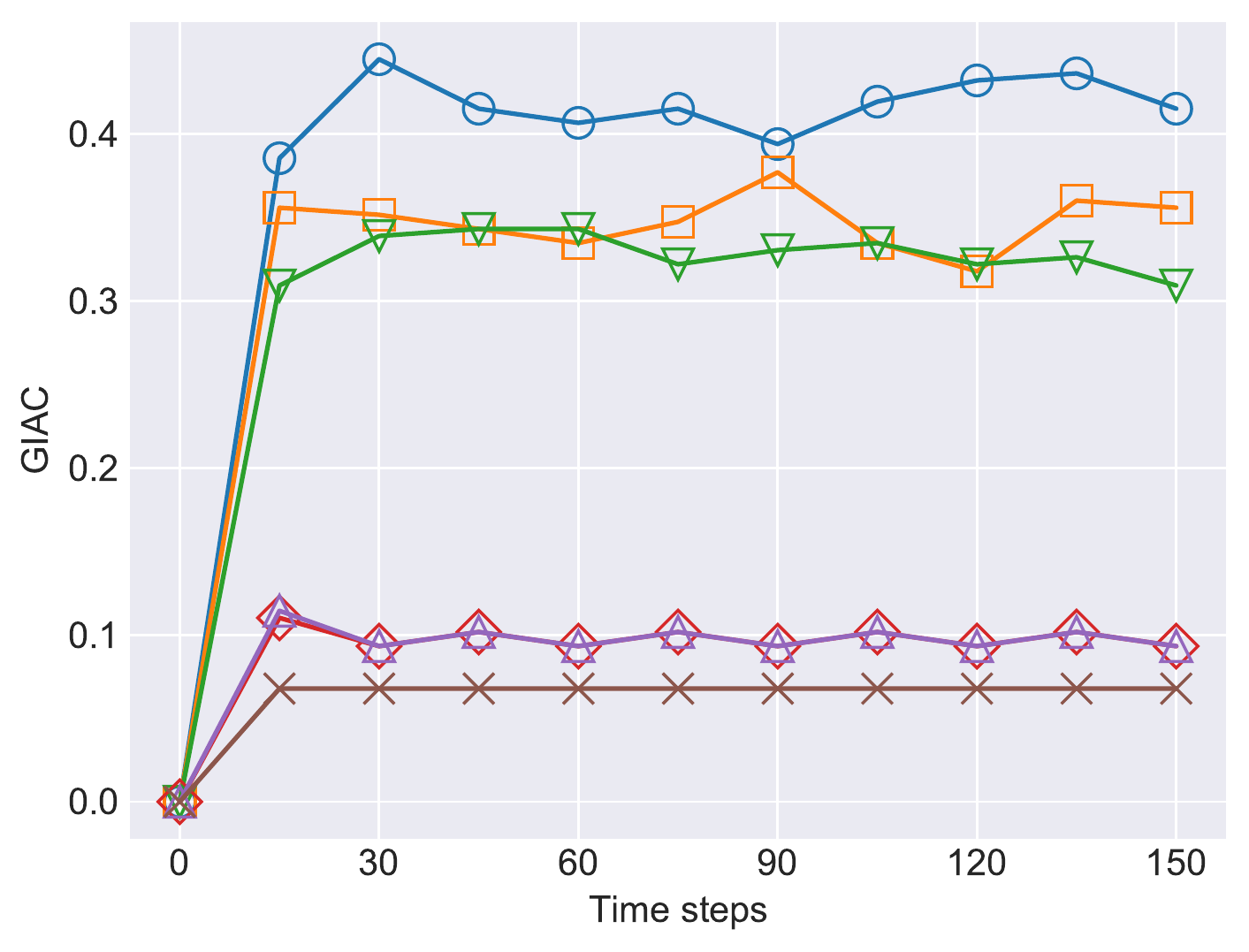}}
  \subfigure[Wiki, $B_t=700$]{\includegraphics[width=0.49\textwidth]{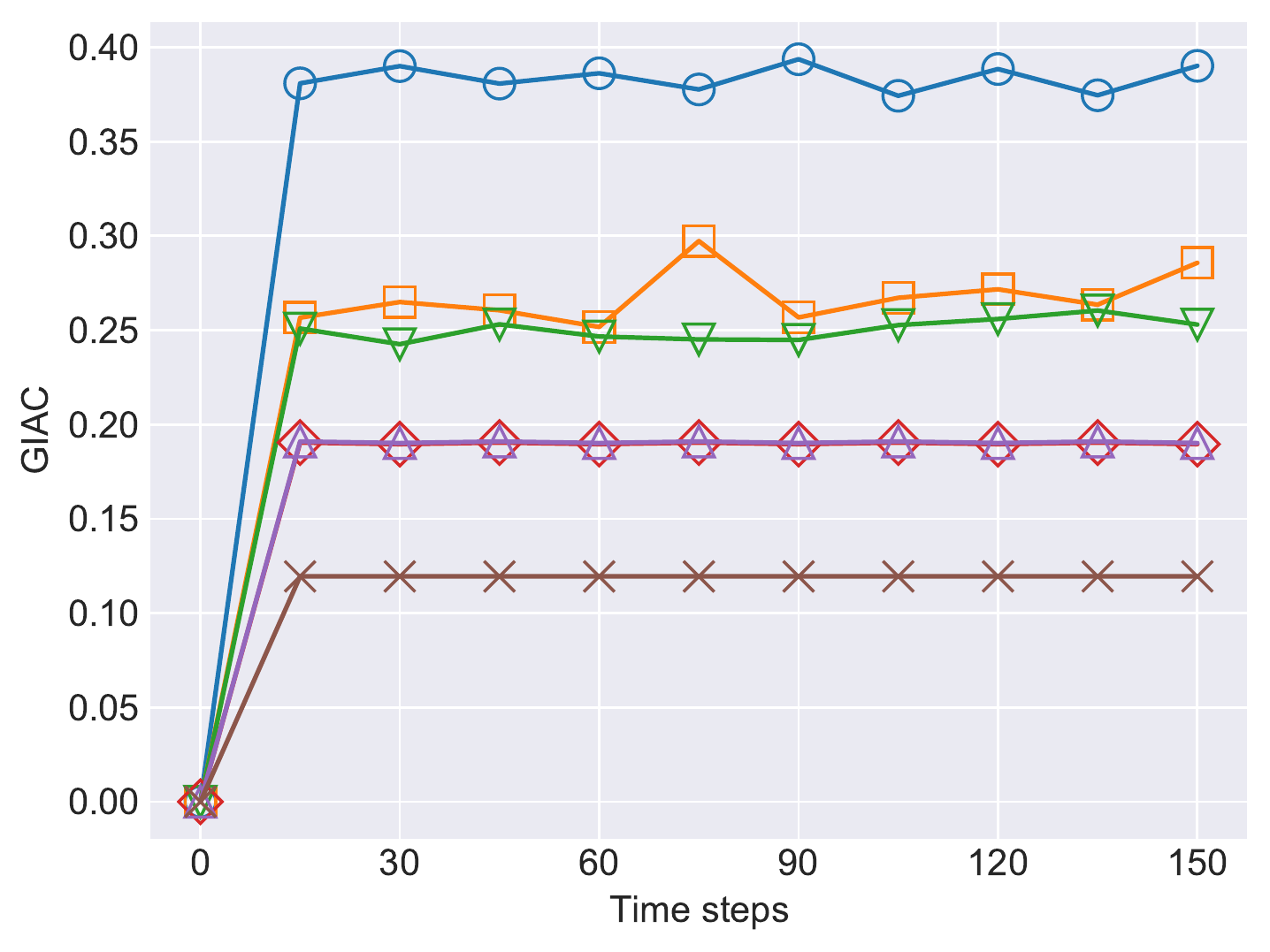}}
  \subfigure[Email, $B_t=50$]{\includegraphics[width=0.49\textwidth]{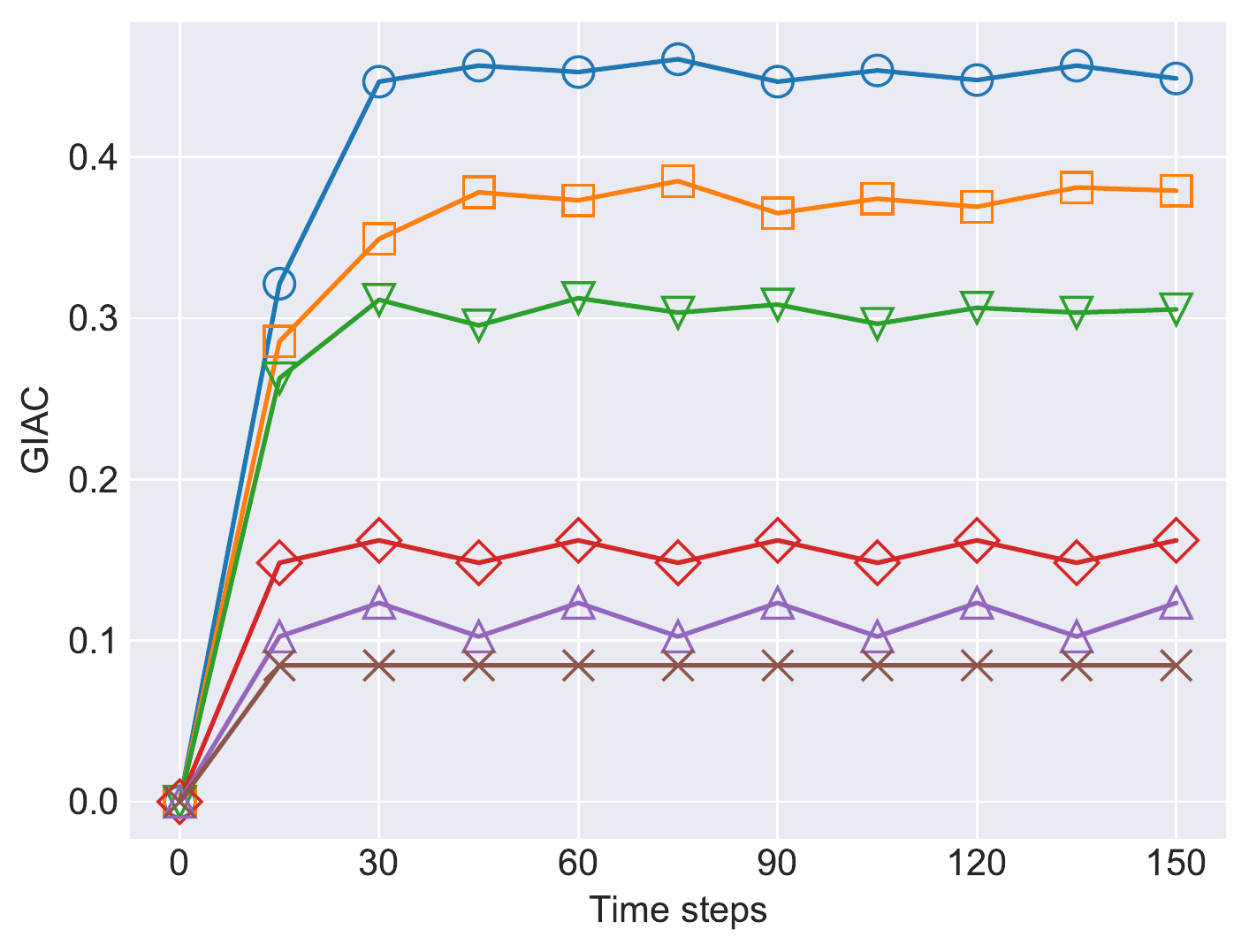}}\\
  \subfigure{\includegraphics[width=\textwidth]{Figures/Exp/legend.pdf}}
  \caption{Comparison of GIAC}
  \label{fig:giac}
\end{figure*}

\subsubsection{Performance on GIAC}
Based on the results shown in Figure \ref{fig:giac}, we can observe that IUD+DGIA can produce higher GIAC than the other five approaches in all four networks. The evident gaps between IUD+DGIA and the other five approaches reflect that IUD can effectively incentivize users by utilizing social influence. IPE+DGIA performs well in Facebook and Email networks, but gives similar performance to DGIA in Twitter and Wiki networks. This is due to the reason that IPE only estimates users' influential degrees based on their latest behaviors which easily leads to incorrect estimation. On the other hand, while DGIA performs worse than IUD+DGIA and IPE+DGIA, its performance is still better than DBP-UCB, Uniform, and No Incentive approaches. The reason behind that is the mechanism of DGIA which limits incentives provided to users makes a number of users are incentivized due to social influence rather than sufficient incentives. 

Another interesting finding from the results demonstrated in Figures \ref{fig:gaup} and \ref{fig:giac} is relating to the relationship between GAUP and GIAC. Since GAUP represents the percentage of incentivized users and GIAC represents the percentage of users who are incentivized by social influence, the difference between these two metrics can reflect the percentage of users who receive sufficient incentives at each time step. Take the Facebook network as an example, the GAUP and the GIAC of IUD+DGIA approximate 70\% and 45\%, respectively, and the difference of that is 25\%. Similarly, DBP-UCB gives 55\% on GAUP and 33\% on GIAC, and the difference of which is 22\%. Furthermore, by comparing the GAUP and GIAC of the other approaches, we notice that they all have a similar difference in the same network. While the results of these differences are close, these approaches still produce different GAUP and GIAC eventually. Apparently, the approach which produces higher GIAC can produce higher GAUP, since properly incentivizing influential users helps to affect more users with fewer incentives. This phenomenon also confirms the effectiveness of the proposed approach.

\begin{table*}[!t]
\centering
\scalebox{0.9}{
\resizebox{\textwidth}{!}{
\begin{tabular}{ccccccccc}
\hline
Dataset                   & \begin{tabular}[c]{@{}c@{}}Total\\ Budget\end{tabular}                      & Approach     & \begin{tabular}[c]{@{}c@{}}Spending\\ Budget\end{tabular}    & $\overline{\mu}$         & $\overline{\tau}$           & Utilization    & $R(\overline{\mu})$&$R(\overline{\tau})$\\\hline
\multirow{6}{*}{Facebook} & \multirow{6}{*}{30,000.00}  & IUD+DGIA     & 20,132.795          & \textbf{0.665} & \textbf{0.419} & 0.671          & \textbf{0.698}   & \textbf{0.461}   \\
                          &                             & IPE+DGIA     & 16,935.677          & 0.583          & 0.364          & 0.565          & 0.685            & 0.452            \\
                          &                             & DGIA         & 14,213.218          & 0.519          & 0.321          & 0.474          & 0.680            & 0.447            \\
                          &                             & Uniform      & \textbf{9,784.353}  & 0.324          & 0.188          & \textbf{0.326} & 0.391            & 0.241            \\
                          &                             & DBP-UCB      & 29,844.800          & 0.544          & 0.312          & 0.995          & 0.350            & 0.204            \\
                          &                             & No Incentive & /          & 0.196          & 0.109          & /              & /                & /                \\\hline
\multirow{6}{*}{Twitter}  & \multirow{6}{*}{3,000.00}   & IUD+DGIA     & 1,184.260           & \textbf{0.691} & \textbf{0.398} & 0.395          & 1.313            & 0.839            \\
                          &                             & IPE+DGIA     & 1,099.433           & 0.596          & 0.327          & 0.366          & 1.157            & 0.710            \\
                          &                             & DGIA         & 843.381             & 0.572          & 0.316          & 0.281          & \textbf{1.421}   & \textbf{0.885}   \\
                          &                             & Uniform      & \textbf{830.932}    & 0.275          & 0.099          & \textbf{0.277} & 0.372            & 0.115            \\
                          &                             & DBP-UCB      & 1,111.750           & 0.312          & 0.100          & 0.371          & 0.376            & 0.088            \\
                          &                             & No Incentive & /           & 0.172          & 0.067          & /              & /                & /                \\\hline
\multirow{6}{*}{Wiki}     & \multirow{6}{*}{105,000.00} & IUD+DGIA     & 41,255.169          & \textbf{0.646} & \textbf{0.370} & 0.393          & \textbf{0.788}   & \textbf{0.499}   \\
                          &                             & IPE+DGIA     & 42,715.642          & 0.511          & 0.260          & 0.407          & 0.427            & 0.210            \\
                          &                             & DGIA         & \textbf{28,806.743} & 0.468          & 0.245          & \textbf{0.274} & 0.479            & 0.259            \\
                          &                             & Uniform      & 46,152.579          & 0.437          & 0.188          & 0.440          & 0.227            & 0.030            \\
                          &                             & DBP-UCB      & 47,013.200          & 0.438          & 0.187          & 0.448          & 0.225            & 0.029            \\
                          &                             & No Incentive & /         & 0.337          & 0.174          & /              & /                & /                \\\hline
\multirow{6}{*}{Email}    & \multirow{6}{*}{7,500.00}   & IUD+DGIA     & 4,050.918           & \textbf{0.662} & \textbf{0.419} & 0.540          & \textbf{0.824}   & \textbf{0.621}   \\
                          &                             & IPE+DGIA     & 4,053.888           & 0.573          & 0.348          & 0.541          & 0.660            & 0.490            \\
                          &                             & DGIA         & 3,075.611           & 0.497          & 0.290          & 0.410          & 0.685            & 0.503            \\
                          &                             & Uniform      & \textbf{2,103.085}  & 0.278          & 0.111          & \textbf{0.280} & 0.222            & 0.098            \\
                          &                             & DBP-UCB      & 5,972.800           & 0.394          & 0.152          & 0.796          & 0.223            & 0.087            \\
                          &                             & No Incentive & /           & 0.216          & 0.083          & /              & /                & /               \\\hline
\end{tabular}%
}
}
\caption{Comparison of Efficiency of Budget Use}
\label{tab:budget}
\end{table*}

\subsubsection{Performance on Efficiency of Budget Use}
With the same constraint of the budget, different approaches may spend different amounts of the budget. Hence, we also attempt to understand the efficiency of budget use. Five metrics are adopted for evaluating each approach, i.e., average DGIA $\overline{\mu}$, average GIAC $\overline{\tau}$, utilization of budget, the return rate of DGIA $R(\mu)$, and the return rate of GIAC $R(\tau)$. $R(\mu)$ and $R(\tau)$ are formulated by using Equations \ref{equ:r_mu} and \ref{equ:r_nu}, where $\overline{\mu}'$ and $\overline{\tau}'$ denote the average GAUP and GIAC of No Incentive approach, respectively.

\begin{equation}\label{equ:r_mu}
R(\overline{\mu}) = \frac{\overline{\mu} - \overline{\mu}'}{utilization}
\end{equation}
\begin{equation}\label{equ:r_nu}
R(\overline{\tau}) = \frac{\overline{\tau} - \overline{\tau}'}{utilization}
\end{equation}

As we can observe from Table \ref{tab:budget}, IUD+DGIA outperforms the other five approaches in all four networks with respect of $\overline{\mu}$ and $\overline{\tau}$. While IUD+DGIA spends more budget than IPE+DGIA and DGIA, $R(\overline{\mu})$ and $R(\overline{\tau})$ demonstrate that IUD+DGIA is efficient in the use of the budget. This is due to the reason that IUD requires more budget for analyzing influential users in the initial phase. However, after identifying influential users, IUD+DGIA can incentivize users with considering social influence, so that the budget is saved. By contrast, Uniform approach spends the least budget on Facebook, Twitter, and Email networks, since it fails to incentivize most users. Meanwhile, DBP-UCB spends more budget but produces limited effects since it provides excessive incentives to some users.

\begin{table*}[!t]
\centering
\scalebox{1}{
\resizebox{\textwidth}{!}{%
\begin{tabular}{ccccc}
\hline
Networks & Initial Number of Users & Final Number of Users & \begin{tabular}[c]{@{}c@{}}The number of users \\ joining the network \end{tabular} & \begin{tabular}[c]{@{}c@{}}The number of users \\ leaving the network\end{tabular} \\ \hline
DN1      & 1446                & 3075                 & {[}1,50{]}                                                                                       & {[}1,20{]}                                                                                                \\
DN2      & 1446              & 2936                & {[}1,50{]}                                                                                       & {[}1,50{]}                                                                                                \\
DN3      & 1446               & 3208                  & {[}1,100{]}                                                                                      & {[}1,50{]}                                                                                                \\ \hline
\end{tabular}%
}
}
\caption{Statistics of Dynamic Networks}
\label{tab:dynamic}
\end{table*}

\begin{figure*}[!t]
  \centering
  \subfigure[GAUP of DN1]{\includegraphics[width=0.32\textwidth]{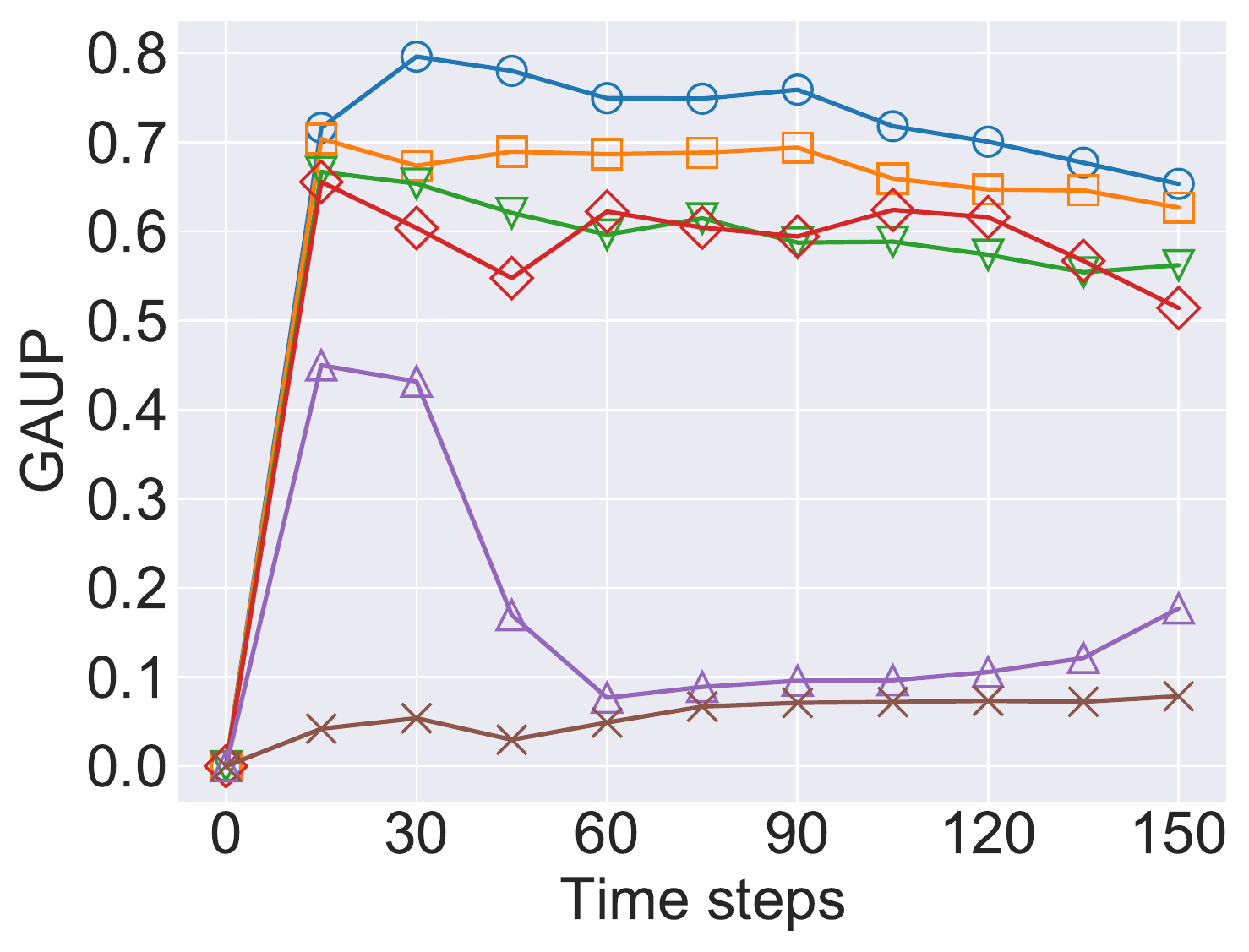}}
  \subfigure[GIAC of DN1]{\includegraphics[width=0.32\textwidth]{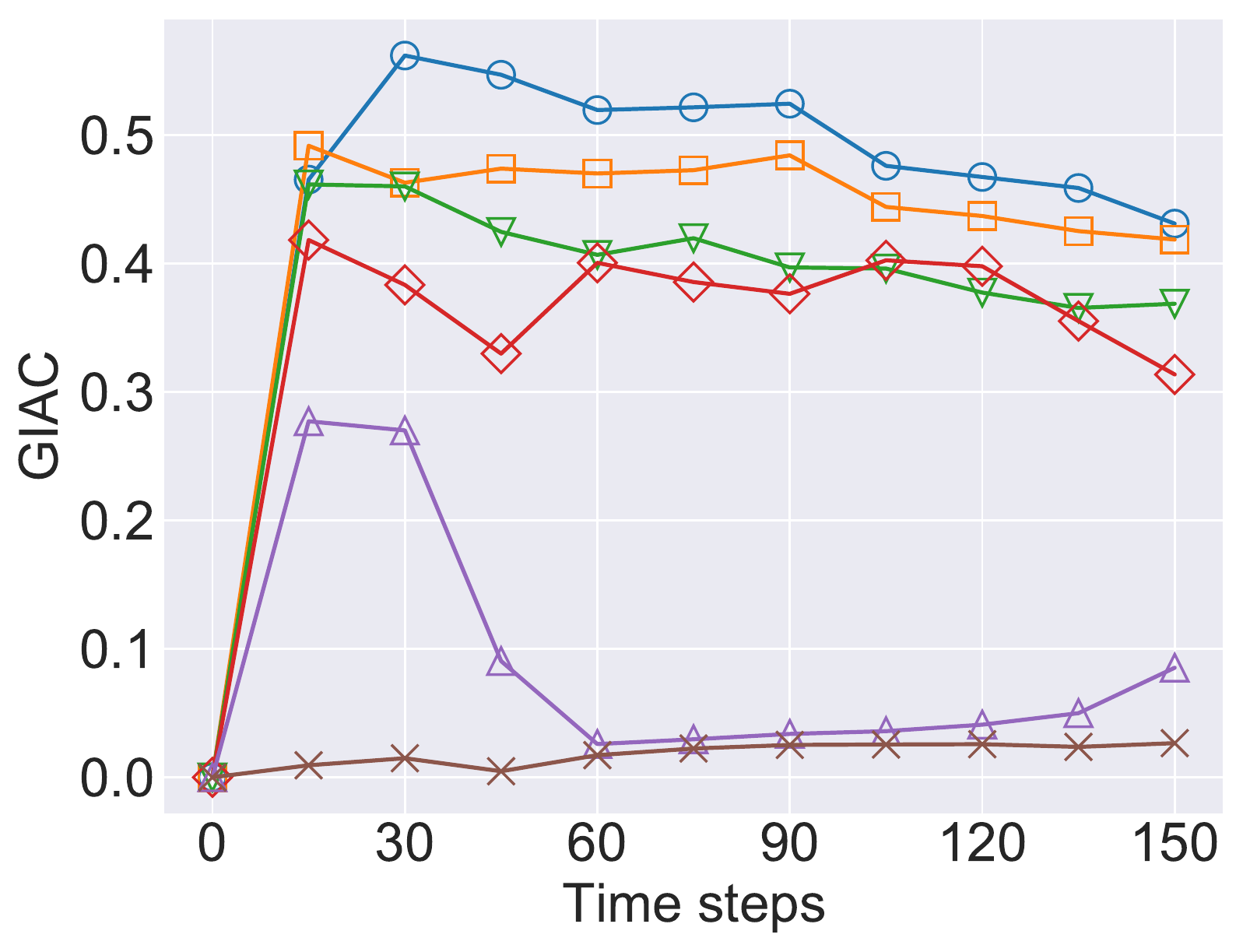}}
  \subfigure[GAUP of DN2]{\includegraphics[width=0.32\textwidth]{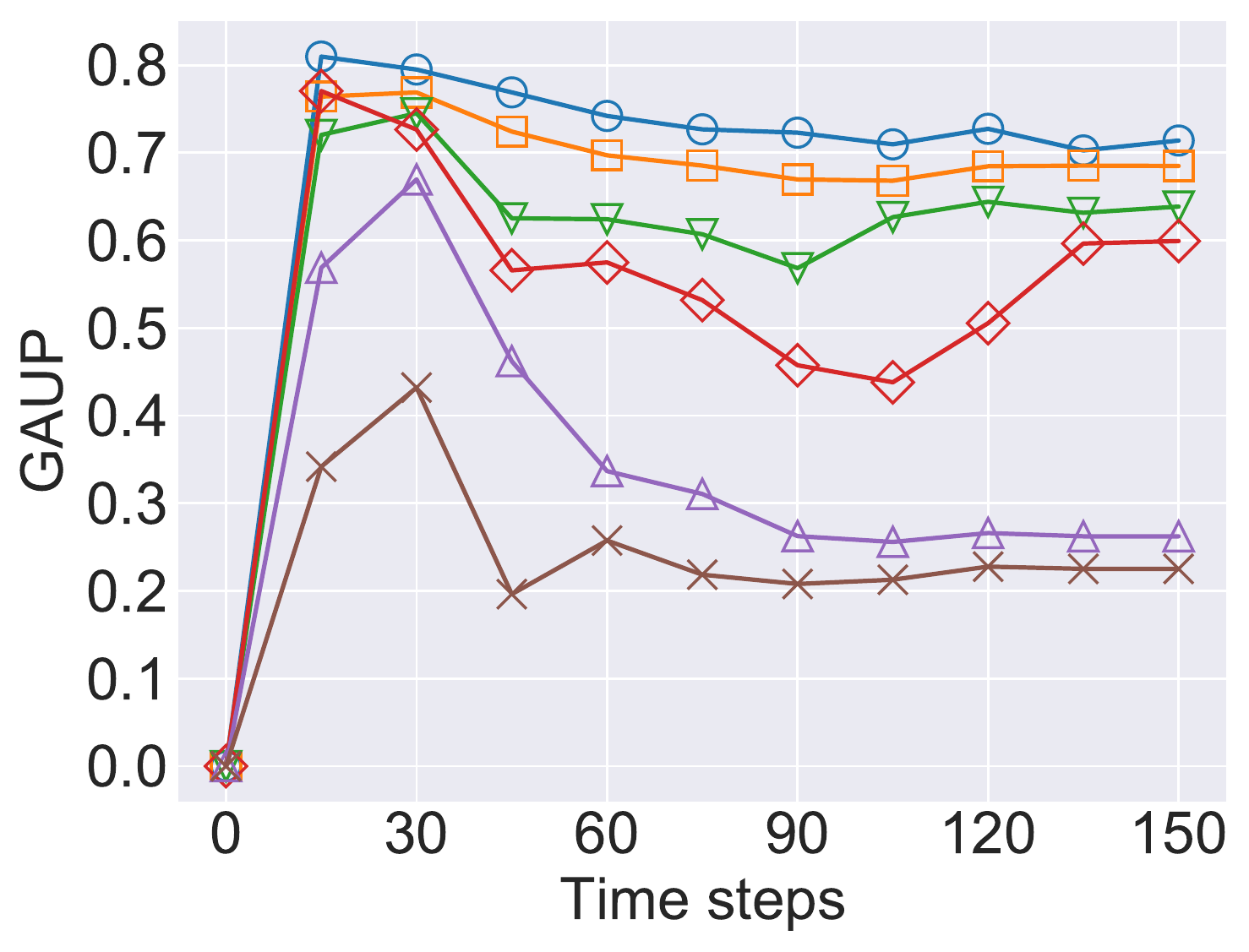}}
  \subfigure[GIAC of DN2]{\includegraphics[width=0.32\textwidth]{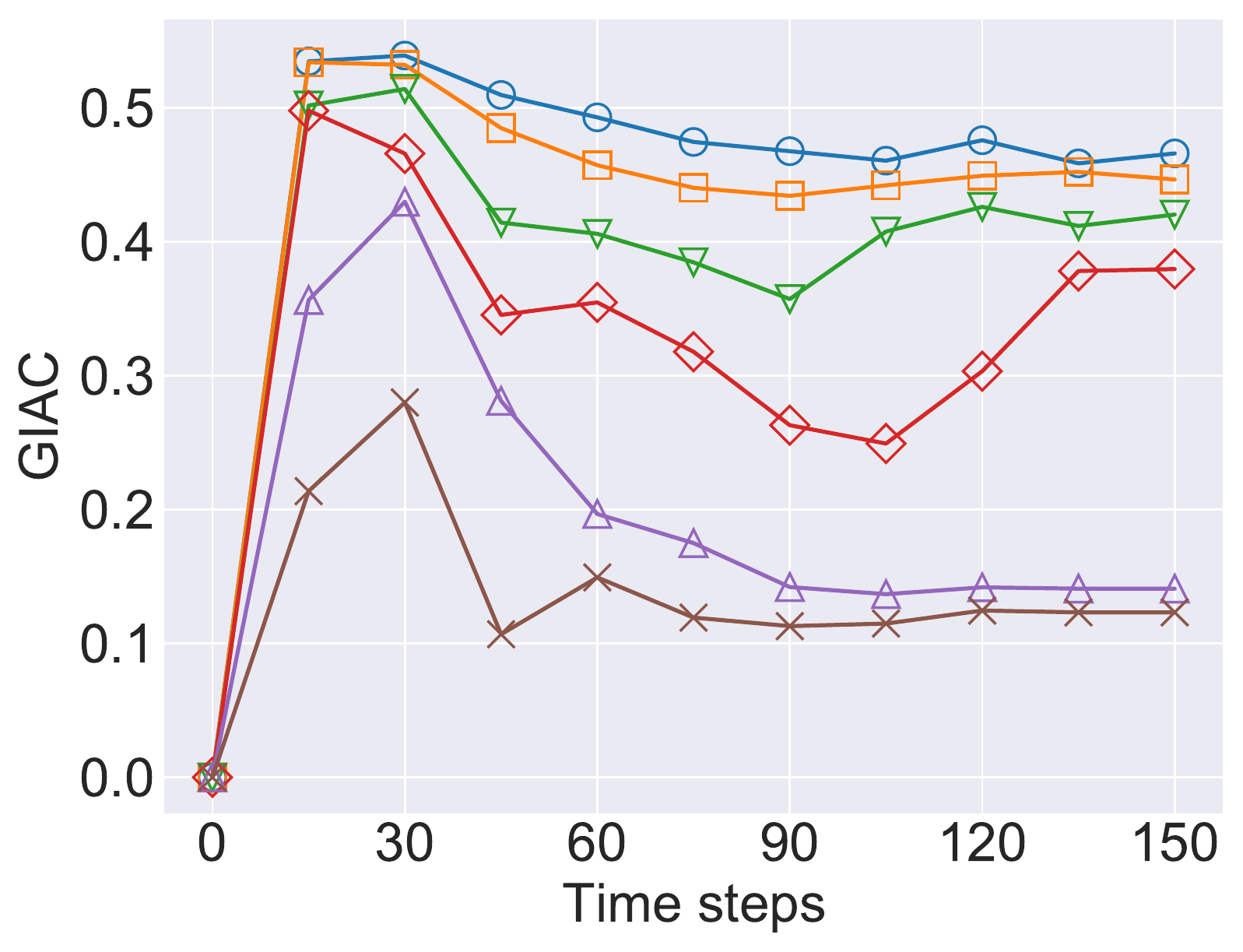}}
  \subfigure[GAUP of DN3]{\includegraphics[width=0.32\textwidth]{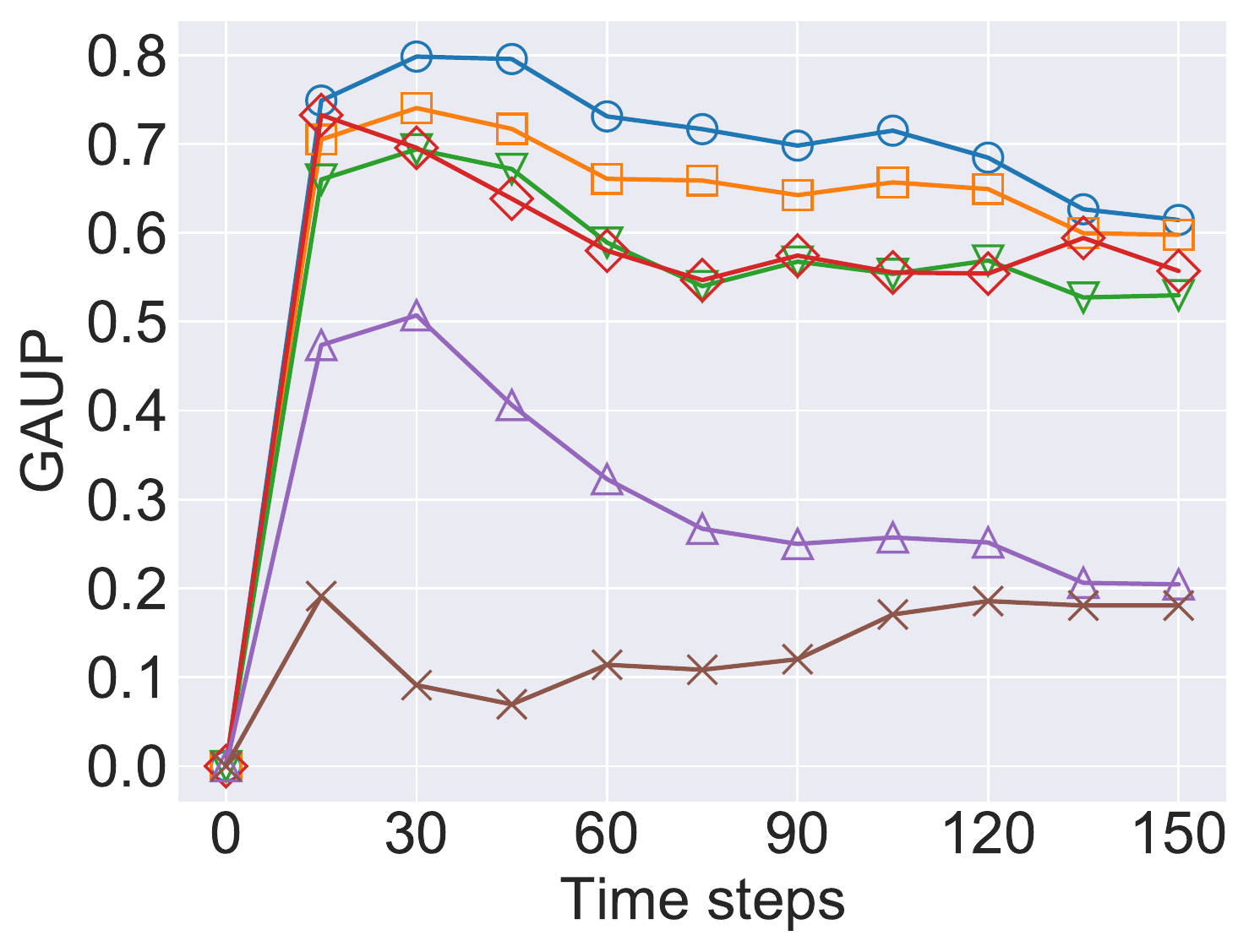}}
  \subfigure[GIAC of DN3]{\includegraphics[width=0.32\textwidth]{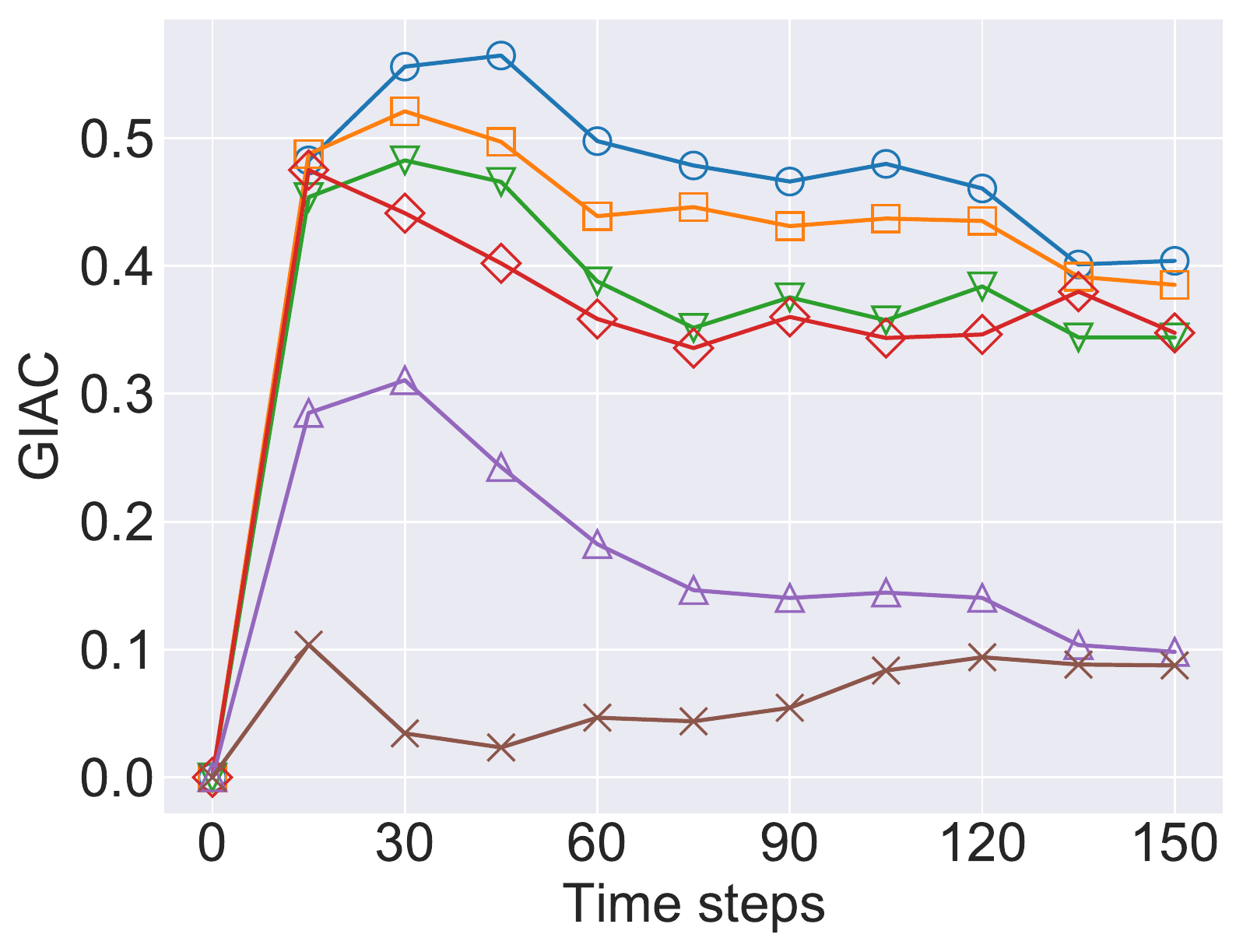}}
  \subfigure{\includegraphics[width=\textwidth]{Figures/Exp/legend.pdf}}
  \caption{Performance of Approaches on Three Dynamic Networks}
  \label{fig:dy}
\end{figure*}

\subsection{Experimental Results on Dynamic Networks}\label{subsec:dynamic}
In the real world, the structure of a social network is possibly dynamic since new users may join the network whereas existing users may leave as well. Hence, to investigate the performance of the proposed approach in dynamic networks, we build three synthetic networks based on the Haverford dataset. The topology of the network would change at the beginning of each time step. We set that a random number of existing users would leave the network at every five time steps, and they cannot influence users in the network anymore after leaving. Meanwhile, a random number of new users would join the network at every time step, and the number of users that a new user can connect to is from 1 to 20. The budget provided for incentive allocation is 70 per time step. The statistics of these three dynamic networks are listed in Table \ref{tab:dynamic}.

Figure \ref{fig:dy} demonstrates the performance of the approaches in dynamic networks. Overall, IUD+DGIA still outperforms the other five approaches by producing higher GAUP and GIAC. Meanwhile, IPE+DGIA and DGIA both bring close results to that of IUD+DGIA. Besides, the GAUP of DBP-UCB is similar to that of DGIA in DN1 and DN2. Whereas in DN3, we notice that all DGIA based approaches and DBP-UCB perform similarly in the final phase. Based on the performance of Uniform approach in DN3, a possible explanation is that most influential users leave the network one after another. Meanwhile, the variation of Uniform approach in DN1 and DN2 is also due to the frequent variety of topology, since we have to spend more budget on incentivizing users directly if influential users leave the network.

\subsection{Discussion}
In the experiments, we simulated a group of users' decision-making in a social network by deploying the ADM. Under four real social networks, we evaluated the effectiveness of the proposed ensemble approach of the IUD and the DGIA by using two major metrics, i.e., GAUP and GIAC. Meanwhile, we evaluated the budget efficiency of the proposed approach. Furthermore, experiments on dynamic social networks are conducted to assess the performance of the proposed approach. 

The experimental results show that the proposed approaches can perform better in incentive allocation. Given a limited budget and time span, IUD+DGIA can effectively and efficiently incentivize most users in a social network. The comparison of three DGIA based approaches (i.e., IUD+DGIA, IPE+DGIA, and DGIA) verifies that 1) IUD can identify influential users more accurately, and 2) Allocating incentives with considering influential ability is helpful to incentivize more users in a social network. Whereas the comparison among DGIA, DBP-UCB, and Uniform approach also demonstrates the effectiveness of DGIA in incentive allocation. The insights uncovered from the experiments can be summarized as follows:
\begin{itemize}
\item The results from Experiment 1 appear consistent with Theorem \ref{theorem:1}, demonstrating that given a fixed budget, incentivizing influential users to propagate influence can help save budget when incentivizing non-influential users.
\item Given a limited budget, leveraging social influence to indirectly incentivize users' behavior is a better way than directly incentivize users in social networks.
\item Even though estimating users' influential ability based on their historical behaviors might not be accurate, IUD can be effective in assisting incentive allocation when the information of social networks is unknown.
\item Experiment 2 verifies that the proposed approaches can perform well in dynamic networks. However, the performance of proposed approaches might fall down if members in the network change frequently, since the proposed approaches require time to better estimate users' influential degree and incentive sensitivity.
\end{itemize}

\section{Conclusions}\label{sec:conclusions}
In this paper, we investigated the incentive allocation problem in unknown social networks. To tackle this problem, we first proposed the ADM model to describe the process of users' decision-making under the incentive and social influence. We also proposed a novel algorithm, i.e., IUD,  to estimate the influential probability between users, for discovering influential users in an unknown network. Subsequently, a dynamic-game-based incentive allocation algorithm, i.e., DGIA, to analyze the probability that a user accepts the incentive and determine the value of incentives, is also proposed. The experimental results demonstrate that the ensemble approach of IUD and DGIA can outperform other approaches both in static and dynamic networks. Meanwhile, the comparison of GIAC demonstrates that the IUD algorithm can effectively discover influential users, and the comparison of budget efficiency demonstrates DGIA algorithm can effectively incentivize users with a limited budget. 

In the future, we will continue to explore potential factors for affecting users' behaviors, and continue to investigate approaches for influential users discovery and more effective incentive allocation in unknown networks.

%% The Appendices part is started with the command \appendix;
%% appendix sections are then done as normal sections
%% \appendix

%% \section{}
%% \label{}

%% If you have bibdatabase file and want bibtex to generate the
%% bibitems, please use
%%
\bibliographystyle{elsarticle-num-names-sort}
\bibliography{reference}
\end{document}